\pdfoutput=1
\documentclass[journal]{IEEEtran}

\usepackage{amsmath,amssymb,amsfonts,amsthm}
\usepackage{bm}
\usepackage{booktabs}
\usepackage{cite}
\usepackage{graphicx}
\usepackage[hidelinks]{hyperref}
\usepackage{xcolor}
\usepackage{mathtools}
\usepackage{comment}
\usepackage{csvsimple}
\usepackage{pgfplotstable}
\graphicspath{{./}}
\usepackage{siunitx}
\theoremstyle{plain}
\newtheorem{theorem}{Theorem}
\newtheorem{lemma}{Lemma}

\theoremstyle{definition}
\newtheorem{assumption}{Assumption}
\theoremstyle{remark}
\newtheorem{remark}{Remark}
\newcommand{\R}{\mathbb{R}}

\newcommand{\norm}[1]{\left\lVert #1\right\rVert}

\newcommand{\tht}{\tilde{\theta}}
\newcommand{\ths}{\theta^{*}}
\newcommand{\proj}{\operatorname{proj}}
\newcommand{\lammin}[1]{\lambda_{\min}\left(#1\right)}
\newcommand{\lammax}[1]{\lambda_{\max}\left(#1\right)}
\def\BibTeX{{\rm B\kern-.05em{\sc i\kern-.025em b}\kern-.08em
    T\kern-.1667em\lower.7ex\hbox{E}\kern-.125emX}}

\begin{document}
\pagestyle{empty}

\title{Nesterov-Accelerated Concurrent Learning for Lyapunov-Based Deep Neural Networks}

\author{Hossein Papi$^{1}$, Omkar Sudhir Patil$^{1}$
\thanks{$^{1}$ Hossein Papi and Omkar Sudhir Patil are with the Division of Electrical and Computer Engineering, Louisiana State University, Baton Rouge, LA 70803 USA (e-mail: hpapi1@lsu.edu, opatil1@lsu.edu).}}

\maketitle\thispagestyle{empty}

\begin{abstract}
Concurrent learning (CL) has recently been extended to Lyapunov-based deep
neural networks (DNNs) to obtain parameter convergence under finite rather
than persistent excitation. However, the existing first-order law converges
slowly, especially for high-dimensional systems and overparameterized
networks. To address this problem, this paper develops a
Nesterov-accelerated concurrent learning (NACL) adaptation law for uncertain
second-order control-affine systems. The developed law drives momentum
dynamics with recorded state-input data, so that all hidden layers of the
DNN are updated online under a verifiable finite excitation condition. A
dynamic state-derivative observer is designed to reconstruct the recorded
input without acceleration measurements. The momentum variable introduces a
cross term coupling the momentum lag error to the state-dependent DNN
Jacobian and the tracking error, and a state-dependent normalization is
introduced into the momentum filter to dominate it. A Lyapunov-based
analysis establishes exponential convergence of the tracking, momentum,
observer, and weight estimation errors to a bounded residual ball.
Simulations on a six-degree-of-freedom unmanned underwater vehicle
demonstrate a $24.5\%$ and $62.9\%$ reduction in the root-mean-square
function approximation error relative to first-order CL and to a
higher-order tuner without CL, respectively.
\end{abstract}

\begin{IEEEkeywords}
Adaptive control, deep neural networks, higher order tuner, concurrent learning, Lyapunov methods, nonlinear control systems.
\end{IEEEkeywords}

\section{Introduction}
\IEEEPARstart{D}{eep} neural networks (DNNs) are universal function
approximators \cite{Hornik1989Multilayer, Narendra1990Identification} and
have long been used to compensate for unstructured uncertainty in nonlinear
systems \cite{Lewis1996Neural, Sanner1992Gaussian, Ge2002Stable,
Farrell2006Poly, Lavretsky2012Robust}. Conventional DNN-based controllers are
trained offline on a fixed dataset and held constant during operation, so no
guarantee relates the learned model to the dynamics encountered at run time.
Lyapunov-based DNNs (Lb-DNNs) remove the offline stage by deriving the
adaptation law for every layer from a stability analysis
\cite{patil2022lyapunov, joshi2020design, sun2022lyapunov,
griffis2023lyapunov, hart2024lyapunov, makumi2025lyapunov}, which yields
real-time weight adaptation with a guarantee on the tracking error.

However, a guarantee on the tracking error is not a guarantee on learning
the model. Since the update laws of \cite{patil2022lyapunov,
sun2022lyapunov} are driven by the instantaneous tracking error alone, the
weight estimates need not converge, and the identified DNN cannot be relied
upon for prediction, fault detection, or operation without continuous
feedback \cite{Patil2024Simultaneous}. Parameter convergence classically
requires persistence of excitation (PE) \cite{Ioannou1996Robust,
Narendra1989Stable, Astrom2008Adaptive, Sastry1989Adaptive}. For a DNN the
condition falls on the network Jacobian, which depends on the weight
estimates themselves, so the requirement is trajectory dependent, cannot be
verified online, and forces a continual tradeoff between exploration and
tracking.

Data-driven adaptation replaces PE with initial, interval, or finite
excitation (FE) conditions imposed on recorded signals, through dynamic
regressor extension and mixing \cite{Aranovskiy2017DREM, ortega2021new,
bobtsov2022generation}, matrix regressors \cite{jenkins2020fast}, and
filter-based and integral constructions \cite{kreisselmeier1977adaptive,
roy2018combined, katiyar2023initial, cho2018composite}. Concurrent learning
(CL) \cite{chowdhary2013concurrent, kamalapurkar2017concurrent,
chowdhary2013cladmpc}, integral CL \cite{parikh2019integral}, and composite
learning \cite{slotine1989composite, pan2018composite} obtain parameter
convergence by driving the update with a prediction error formed from a
history stack, so that the excitation condition becomes verifiable online.
Extending CL to a DNN is challenging because of the nonlinear-in-parameter
(NIP) dependence on the inner-layer weights: the prediction error is not
affine in the estimates, and the Jacobian that takes its place evolves with
them. Lb-DNN controllers have been augmented with CL for control-affine and
Euler-Lagrange plants \cite{Basyal2024Augmentation, Basyal2024EulerLagrange,
Basyal2026Realtime}, and the all-layer case was addressed in
\cite{Hart2025Concurrent} by reconstructing the history stack from the
current Jacobian. However, the Jacobian history stack of a deep network is severely
ill-conditioned, since inner-layer directions are excited only through
products of outer-layer weights, so the first-order law of
\cite{Hart2025Concurrent} learns at the rate of its weakest recorded
direction.

The first-order structure of the update law itself has been revisited.
Higher-order tuners (HOTs), motivated by the continuous-time limit of
Nesterov's accelerated gradient method \cite{Nesterov1983Method,
Su2014Differential, gaudio2019connections}, add an auxiliary momentum state
and a state-dependent normalization, and improve the convergence rate for
constant and time-varying regressors \cite{gaudio2021high,
cui2022accelerated, gaudio2022parameter, BasuRoy2026Exponential,
pan2023least, shi2024composite}. Momentum and concurrent
learning were combined for linear-in-the-parameters Euler-Lagrange systems
in \cite{Le2022Integral}, where an integral CL term drives an accelerated
gradient update under finite excitation. The accelerated structure reaches
the DNN setting in \cite{Le2025Accelerated}, where all weights adapt in real
time and asymptotic tracking is established, but without recorded data.

Combining momentum with concurrent learning for a DNN is not a matter of
replacing the update law of \cite{Hart2025Concurrent}. A higher-order law
splits the weight estimation error into two coupled states, the momentum
variable and the lag of the estimates behind it, and a single CL gain
multiplies both. Consequently, the recorded-data term damps one state while
exciting the other and is no longer purely stabilizing: a richer history
stack drives the estimates toward their ideal values and drives the lag
harder at the same time. Moreover, the momentum variable introduces a cross
term among the lag, the state-dependent DNN Jacobian, and the tracking error
that has no first-order counterpart. Higher-order tuners for
linear-in-the-parameters uncertainty dominate the analogous term with a
normalization built from a known regressor, whereas the DNN Jacobian is
evaluated at the current estimates, so the normalization is estimate
dependent and its boundedness must be established rather than assumed.

This paper develops a Nesterov-accelerated concurrent learning (NACL)
adaptation law for Lyapunov-based DNNs, together with the stability analysis
that resolves the aforementioned obstructions. Relative to
\cite{Le2022Integral}, which combined momentum and CL for known regressors,
the history stack here acts through the current-weight Jacobian, and the
Taylor remainder and the normalization must be designed rather than assumed.
The central element is a dual state-dependent normalization matched to the
two obstructions above. Projection of the weight
estimates and the momentum variable ensures that the normalizers, the
Jacobian, and the Taylor remainder remain uniformly bounded, and a dynamic
state-derivative observer reconstructs the recorded control input without
acceleration measurements. A Lyapunov-based analysis shows that the
tracking, momentum, weight estimation, and observer errors converge
exponentially to an ultimate bound under a finite excitation condition on
the recorded Jacobian data, rather than under persistent excitation of the
trajectory. Simulations on a six-degree-of-freedom unmanned underwater
vehicle show that momentum without recorded data performs worse than
first-order CL, whereas the developed law reduces the root-mean-square
function approximation error by $62.9\%$ relative to the momentum tuner and
by $24.5\%$ relative to first-order CL.

\subsection{Notation}\label{sec:Notation}

The sets of real, positive, and nonnegative real numbers are denoted by
$\mathbb{R}$, $\mathbb{R}_{>0}$, and $\mathbb{R}_{\geq0}$, respectively.
The identity matrix is denoted by $I_n$, and $A^\top$ denotes the transpose
of $A$. The symbols $\|\cdot\|$, $\operatorname{tr}(\cdot)$, and
$\lambda_{\min}(\cdot)$ denote the Euclidean/Frobenius norm, trace, and
minimum eigenvalue, respectively. The space of essentially bounded signals is denoted by $\mathcal{L}_\infty$.

\section{Problem Formulation and Control Design}\label{sec:problem}
\subsection{System Model}
Consider a class of second-order control-affine nonlinear dynamical systems governed by
\begin{equation}
\ddot{x} = f(X) + u,
\label{eq:sys_dyn}
\end{equation}
where $x \in \mathbb{R}^n$ denotes the system state vector, $X = [x^\top, \dot{x}^\top]^\top \in \mathbb{R}^{2n}$ is the composite state vector, $u \in \mathbb{R}^n$ is the control input, and $f(X) \in \mathbb{R}^n$ is an unknown, smooth nonlinear drift vector field.

Let $x_d(t)\in\mathbb{R}^n$ be a reference trajectory that is twice
continuously differentiable and satisfies
$\|x_d\|\le\bar{x}_d$, $\|\dot{x}_d\|\le\bar{\dot{x}}_d$, and
$\|\ddot{x}_d\|\le\bar{\ddot{x}}_d$, where
$\bar{x}_d,\bar{\dot{x}}_d,\bar{\ddot{x}}_d\in\mathbb{R}_{>0}$ are
known constants. Define the tracking error $e \triangleq x - x_d$ and the filtered tracking error $r \in \mathbb{R}^n$ \cite{Ioannou1996Robust, Lewis1996Neural} as
\begin{equation}
r = \dot{e} + \alpha_1 e,
\label{eq:filtered_error}
\end{equation}
where $\alpha_1 > 0$ is a gain constant. The unknown drift function $f(X)$ is modeled using a DNN over a compact domain $\Omega_X \subset \mathbb{R}^{2n}$ \cite{Hornik1989Multilayer, Lewis1996Neural} as
\begin{equation}
f(X) = \Psi(X, \theta^*) + \varepsilon(X),
\label{eq:dnn_approx}
\end{equation}
where $\Psi(X, \theta^*) \in \mathbb{R}^n$ represents the ideal DNN output, $\theta^* \in \mathbb{R}^{n_\theta}$ is the ideal weight vector containing all stacked inner and outer layer weights and biases, and $\varepsilon(X) \in \mathbb{R}^n$ is the function approximation error bounded by $\|\varepsilon(X)\| \le \bar{\varepsilon}$. 
Taking the time derivative of \eqref{eq:filtered_error} and applying
\eqref{eq:sys_dyn} and \eqref{eq:dnn_approx} yields the open-loop
error system
\begin{equation}
\dot{r} = \Psi(X,\theta^*) + \varepsilon(X) + u - \ddot{x}_d
          + \alpha_1\dot{e}.
\label{eq:r_dot_open}
\end{equation}

Since a DNN is NIP with respect to its hidden-layer weights, the function
approximation error cannot be expressed as a linear function of the weight
estimation error. Let $\hat{\theta} \in \mathbb{R}^{n_\theta}$ denote the
weight estimate, and define the parameter estimation error as
$\tilde{\theta} \triangleq \theta^* - \hat{\theta}$. 
The ideal and estimated weights are constrained to the known compact
convex set
$\Theta\triangleq\{\theta\in\mathbb{R}^{n_\theta}:\|\theta\|\le\bar{\theta}\}$,
where $\bar{\theta}\in\mathbb{R}_{>0}$ is a user-selected bound
satisfying $\theta^*\in\Theta$.

\subsection{Closed-Loop Error System and Control Law Development}

The controller is the standard Lb-DNN controller of \cite{patil2022lyapunov},
held fixed so that the contribution lies entirely in the adaptation law of
Section~\ref{sec:adaptation} and so that every CL method compared in
Section~\ref{sec:sim} drives the same closed loop. Since $f(X)$ is unknown,
the control input is designed to cancel the known terms in
\eqref{eq:r_dot_open} and to leave the parametric error
$\mathcal{J}(X,\hat{\theta})\tilde{\theta}$, which is the signal the
adaptation law acts on.
Based on the subsequent stability analysis, the control input is designed as
\begin{equation}
u = \ddot{x}_d - \Psi(X,\hat{\theta}) - k_1 r - e - \alpha_1\dot{e},
\label{eq:control_input}
\end{equation}
where $k_1\in\mathbb{R}_{>0}$ is a user-defined control gain.

Taking a first-order Taylor series expansion of $\Psi(X, \theta^*)$ around $\hat{\theta}$ yields
\begin{equation}
\Psi(X, \theta^*) - \Psi(X, \hat{\theta}) = \mathcal{J}(X, \hat{\theta})\tilde{\theta} + R(X, \tilde{\theta}),
\label{eq:taylor_expansion}
\end{equation}
where $\mathcal{J}(X, \hat{\theta}) \triangleq \left.\frac{\partial \Psi(X, \theta)}{\partial \theta}\right|_{\theta = \hat{\theta}} \in \mathbb{R}^{n \times {n_\theta}}$ is the state-dependent DNN Jacobian matrix, and $R(X, \tilde{\theta}) \in \mathbb{R}^n$ represents the higher-order Taylor remainder term, which satisfies $\|R(X, \tilde{\theta})\| \le \rho_0(\|X\|)\,\|\tilde{\theta}\|^{2}$ on a compact domain for a known polynomial $\rho_0$ \cite{Patil2026Bounds}; the explicit form of $\rho_0$ is given in \eqref{eq:R_bound}.

Substituting \eqref{eq:dnn_approx} and \eqref{eq:control_input} into
the open-loop error system, applying \eqref{eq:taylor_expansion}, and
canceling cross terms yields
\begin{equation}
\dot{r} = \mathcal{J}(X,\hat{\theta})\tilde{\theta}
        + R(X,\tilde{\theta}) + \varepsilon(X) - k_1 r - e.
\label{eq:r_dot}
\end{equation}

\subsection{Dynamic State-Derivative Observer}

To incorporate previously recorded data into the update law in the
concurrent learning style, the update law is augmented with a history stack
containing the error between the applied control input and a reconstructed
version of it. Reconstructing the input from \eqref{eq:sys_dyn} requires
$\ddot{x}$, which is not measured, so a dynamic state-derivative observer
is developed. The observer is designed as
\begin{align}
\dot{\hat{r}}      &= \hat{\Delta} - \ddot{x}_d + \alpha_1\dot{e}
                      + \alpha_2\tilde{r},
\label{eq:obs_r}\\
\dot{\hat{\Delta}} &= \tilde{r} + k_\Delta\tilde{\Delta} + \dot{u},
\label{eq:obs_delta}
\end{align}
where $\hat{r},\hat{\Delta}\in\mathbb{R}^n$ denote the observer
estimates of $r$ and $\ddot{x}$, the observer errors are
$\tilde{r} \triangleq r - \hat{r}$ and
$\tilde{\Delta} \triangleq \ddot{x} - \hat{\Delta}$,
$\alpha_2, k_\Delta\in\mathbb{R}_{>0}$ are constant observer gains,
and $\dot{u}$ is the derivative of the control input. 
Taking the time derivative of $\tilde{r}$ and $\tilde{\Delta}$ and
applying \eqref{eq:sys_dyn}, \eqref{eq:obs_r}, and
\eqref{eq:obs_delta} yields
\begin{align}
\dot{\tilde{r}}      &= \tilde{\Delta} - \alpha_2\tilde{r},
\label{eq:obs_err_r}\\
\dot{\tilde{\Delta}} &= \dot{f}(X) - \tilde{r} - k_\Delta\tilde{\Delta},
\label{eq:obs_err_delta}
\end{align}
where $\dot{f}(X)\triangleq\frac{\partial f}{\partial x}\dot{x}
+ \frac{\partial f}{\partial\dot{x}}\ddot{x}$.

The observer error $\tilde{r}$ is known, since $r$ and $\hat{r}$ are
both known. Because \eqref{eq:obs_err_r} gives
$\tilde{\Delta} = \dot{\tilde{r}} + \alpha_2\tilde{r}$, an
implementable form of \eqref{eq:obs_delta} is obtained by integrating
both sides and using
$\int_{t_0}^{t}\dot{\tilde{r}}\,d\tau = \tilde{r}(t)-\tilde{r}(t_0)$,
which yields
\begin{align}
\hat{\Delta}(t) ={}& \hat{\Delta}(t_0)
   + k_\Delta\big(\tilde{r}(t)-\tilde{r}(t_0)\big)
   + \big(u(t)-u(t_0)\big)\notag\\
&+ \int_{t_0}^{t}\big(k_\Delta\alpha_2+1\big)\tilde{r}(\tau)\,d\tau,
\label{eq:obs_implementable}
\end{align}
so that $\dot{u}$ is never required online.

Let the concatenated observer error vector
$z\in\mathbb{R}^{2n}$ be defined as
$z\triangleq[\tilde{r}^{\top},\tilde{\Delta}^{\top}]^{\top}$, and let
the composite error vector be
$\zeta\triangleq[r^{\top},e^{\top},\vartheta^{\top},\eta^{\top}]^{\top}
\in\mathbb{R}^{\psi}$ with $\psi\triangleq 2n+2n_\theta$, where the
parameter coordinates $\vartheta,\eta\in\mathbb{R}^{n_\theta}$ are
defined in \eqref{eq:hot_coords} below. Let the open
and connected set $\mathcal{D}\subset\mathbb{R}^{\psi}$ be defined as
$\mathcal{D}\triangleq\{\zeta\in\mathbb{R}^{\psi}:\|\zeta\|<\chi\}$,
where $\chi\in\mathbb{R}_{>0}$ is a subsequently defined known upper
bound. The following lemma bounds $\dot{f}$ on $\mathcal{D}$.

\begin{lemma}\label{lem:fdot}
For all $\zeta\in\mathcal{D}$, there exists a constant
$\delta_f\in\mathbb{R}_{>0}$ such that
$\|\dot{f}(X)\|\le\delta_f$.
\end{lemma}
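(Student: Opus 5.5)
The plan is to bound each factor in $\dot{f}(X)=\frac{\partial f}{\partial x}\dot{x}+\frac{\partial f}{\partial \dot{x}}\ddot{x}$ on $\mathcal{D}$. We first show that $X$ lies in a compact set. Then we use smoothness of $f$ to bound its partial derivatives, and bound $\dot{x}$ and $\ddot{x}$ through the closed loop.

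\emph{Step 1 (state bound).} For $\zeta\in\mathcal{D}$ we have $\|e\|\le\|\zeta\|<\chi$ and $\|r\|<\chi$. By \eqref{eq:filtered_error}, $\dot{e}=r-\alpha_1 e$, so $\|\dot e\|\le(1+\alpha_1)\chi$. The bounds on $x_d$ and $\dot x_d$ then give
\begin{equation*}
\|x\|\le\chi+\bar{x}_d,\qquad \|\dot x\|\le(1+\alpha_1)\chi+\bar{\dot{x}}_d .
\end{equation*}
Hence $X$ lies in a compact ball $\mathcal{B}_X$ whose radius depends only on $\chi,\alpha_1,\bar{x}_d,\bar{\dot{x}}_d$. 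This ball is implicitly taken inside $\Omega_X$ when $\chi$ is subsequently selected. Since $f$ is smooth, hence $C^1$, continuity on the compact set $\mathcal{B}_X$ gives constants $c_x,c_{\dot x}>0$ with $\|\partial f/\partial x\|\le c_x$ and $\|\partial f/\partial\dot x\|\le c_{\dot x}$ there.

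\emph{Step 2 (acceleration bound).} From \eqref{eq:sys_dyn} and \eqref{eq:control_input},
\begin{equation*}
\ddot x=f(X)+\ddot{x}_d-\Psi(X,\hat\theta)-k_1r-e-\alpha_1\dot e .
\end{equation*}
The term $f(X)$ is bounded by continuity on $\mathcal{B}_X$. The reference and error terms are bounded by Step 1 and the reference bounds.

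The only delicate term is $\Psi(X,\hat\theta)$. We bound it using the weight projection, which ensures $\hat\theta\in\Theta$, so $\|\hat\theta\|\le\bar\theta$. Since $\Psi$ is a composition of smooth activations and affine maps, it is continuous in $(X,\hat\theta)$. It is therefore bounded on the compact set $\mathcal{B}_X\times\Theta$.

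Should projection not yet be available at this point, the fallback is the parameter coordinates. Because $\tilde\theta$ is expressed through $\vartheta,\eta$, which are components of $\zeta$, we get $\|\tilde\theta\|\le c\chi$. Then $\|\hat\theta\|\le\bar\theta+c\chi$, again a compact set. Either way, $\|\ddot x\|\le\bar a$ for a constant $\bar a$.

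\emph{Step 3 (conclusion).} Combining the steps gives
\begin{equation*}
\|\dot f(X)\|\le c_x\big((1+\alpha_1)\chi+\bar{\dot{x}}_d\big)+c_{\dot x}\,\bar a\triangleq\delta_f .
\end{equation*}
The main obstacle is Step 2. Bounding $\ddot x$ requires controlling the DNN output at the current estimate, so the argument must tie $\hat\theta$ to a compact set. It also requires that the state ball $\mathcal{B}_X$ lies within the approximation domain $\Omega_X$. Both are achieved by the projection and by the choice of $\chi$. The remaining bounds are routine continuity-on-compacta arguments.
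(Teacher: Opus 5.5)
Your proof is correct and follows the same route as the argument the paper defers to \cite{Hart2025Concurrent}. You bound $\|X\|\le(2+\alpha_1)\chi+\bar{x}_d+\bar{\dot{x}}_d$ from the $e$ and $r$ components, which is the bound $\|X\|\le c_0+c_1\|\zeta\|$ the paper later reuses; you bound $\ddot{x}$ through the closed loop using $\hat{\theta}\in\Theta$ from projection; and you conclude by continuity of the partial derivatives of $f$ on a compact set. One minor point: since $f$ is smooth on all of $\mathbb{R}^{2n}$ and you bound $f$ and $\Psi$ directly rather than through \eqref{eq:dnn_approx}, this lemma does not actually need $\mathcal{B}_X\subset\Omega_X$.
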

\begin{proof}
See \cite{Hart2025Concurrent}.
\end{proof}

Let $\Lambda_0\triangleq\min\{k_\Delta,\,2\alpha_2\}$, and let the initial
acceleration error be bounded by a known constant
$\|\tilde{\Delta}(t_0)\|\le\bar{\Delta}_0$. The following lemma summarizes
the property of the observer used in the subsequent development.

\begin{lemma}\label{lem:observer}
Consider the observer given by \eqref{eq:obs_r} and \eqref{eq:obs_delta}
with $\hat{r}(t_0)=r(t_0)$, and let $\delta_\Delta\in\mathbb{R}_{>0}$ be a
prescribed accuracy satisfying the feasibility gain condition
\begin{equation}
k_\Delta\Lambda_0 > \frac{\delta_f^2}{\delta_\Delta^2}.
\label{eq:feasibility}
\end{equation}
Provided $\zeta(\tau)\in\mathcal{D}$ for all $\tau\in[t_0,t]$, the observer
error satisfies $\|z(t)\|\le\delta_\Delta$ for all $t\ge t_\Delta$, where
the settling time $t_\Delta\ge t_0$ is a known function of $\Lambda_0$,
$k_\Delta$, $\delta_f$, $\delta_\Delta$, and $\bar{\Delta}_0$.
\end{lemma}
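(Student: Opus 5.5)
We use the quadratic Lyapunov function $V_z\triangleq\frac12\tilde r^\top\tilde r+\frac12\tilde\Delta^\top\tilde\Delta=\frac12\|z\|^2$ for the observer error system \eqref{eq:obs_err_r}--\eqref{eq:obs_err_delta}. Differentiating along trajectories, the cross terms $\tilde r^\top\tilde\Delta$ and $-\tilde\Delta^\top\tilde r$ cancel because of the $+\tilde r$ injection in \eqref{eq:obs_delta}. This leaves
\begin{equation*}
\dot V_z=-\alpha_2\|\tilde r\|^2-k_\Delta\|\tilde\Delta\|^2+\tilde\Delta^\top\dot f(X).
\end{equation*}
While $\zeta(\tau)\in\mathcal D$ on $[t_0,t]$, Lemma~\ref{lem:fdot} gives $\|\dot f(X)\|\le\delta_f$. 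Young's inequality in the form $\tilde\Delta^\top\dot f\le\frac{k_\Delta}{2}\|\tilde\Delta\|^2+\frac{\delta_f^2}{2k_\Delta}$ then yields
\begin{equation*}
\dot V_z\le-\alpha_2\|\tilde r\|^2-\frac{k_\Delta}{2}\|\tilde\Delta\|^2+\frac{\delta_f^2}{2k_\Delta}\le-\Lambda_0V_z+\frac{\delta_f^2}{2k_\Delta},
\end{equation*}
using $\alpha_2\|\tilde r\|^2=\alpha_2\cdot 2\cdot\frac12\|\tilde r\|^2\ge\Lambda_0\cdot\frac12\|\tilde r\|^2$ and $\frac{k_\Delta}{2}\|\tilde\Delta\|^2\ge\Lambda_0\cdot\frac12\|\tilde\Delta\|^2$, where $\Lambda_0=\min\{k_\Delta,2\alpha_2\}$ absorbs both coefficients.

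Next we apply the comparison lemma on $[t_0,t]$. It gives
\begin{equation*}
\|z(t)\|^2\le\|z(t_0)\|^2e^{-\Lambda_0(t-t_0)}+\frac{\delta_f^2}{k_\Delta\Lambda_0}\bigl(1-e^{-\Lambda_0(t-t_0)}\bigr).
\end{equation*}
Since $\hat r(t_0)=r(t_0)$, we have $\|z(t_0)\|=\|\tilde\Delta(t_0)\|\le\bar\Delta_0$. The feasibility condition \eqref{eq:feasibility} says exactly that the asymptotic level $\delta_f^2/(k_\Delta\Lambda_0)$ is strictly below $\delta_\Delta^2$. It therefore suffices to choose $t_\Delta$ so that the transient term closes the remaining gap.

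If $\bar\Delta_0\le\delta_\Delta$, we take $t_\Delta=t_0$: the right-hand side is a convex combination of two quantities each at most $\delta_\Delta^2$. Otherwise we take
\begin{equation*}
t_\Delta\triangleq t_0+\frac{1}{\Lambda_0}\ln\!\left(\frac{\bar\Delta_0^2-\delta_f^2/(k_\Delta\Lambda_0)}{\delta_\Delta^2-\delta_f^2/(k_\Delta\Lambda_0)}\right).
\end{equation*}
This is well defined by \eqref{eq:feasibility}. With this choice, $(\bar\Delta_0^2-\delta_f^2/(k_\Delta\Lambda_0))e^{-\Lambda_0(t-t_0)}+\delta_f^2/(k_\Delta\Lambda_0)\le\delta_\Delta^2$ for all $t\ge t_\Delta$. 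Thus $t_\Delta$ is an explicit known function of $\Lambda_0,k_\Delta,\delta_f,\delta_\Delta,\bar\Delta_0$, as the statement requires.

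The main point of care is that the bound on $\dot f$ holds only while $\zeta$ stays in $\mathcal D$. The argument is therefore conditional, and the comparison must be carried out only over intervals where that hypothesis holds, as the statement assumes. The existence of solutions and the interplay with the overall invariance argument are deferred to the main theorem. The second point is bookkeeping the constant from Young's inequality so that the threshold matches \eqref{eq:feasibility} exactly. The split $k_\Delta/2$ above gives precisely $\delta_f^2/(k_\Delta\Lambda_0)$; a different weighting would change the constant.
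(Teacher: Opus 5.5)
Your argument is correct and is essentially the one the paper defers to \cite{Hart2025Concurrent}. The quadratic function $V_z=\tfrac12\|z\|^2$, the cross-term cancellation, Young's inequality weighted by $k_\Delta$, and the comparison lemma reproduce exactly the rate $\Lambda_0=\min\{k_\Delta,2\alpha_2\}$ and the residual level $\delta_f^2/(k_\Delta\Lambda_0)$ built into \eqref{eq:feasibility}, and your explicit $t_\Delta$, with the case split on $\bar{\Delta}_0\le\delta_\Delta$, supplies the \emph{known function} the statement asserts.
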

\begin{proof}
See \cite{Hart2025Concurrent}.
\end{proof}

\section{NACL Adaptation Law}
\label{sec:adaptation}
 
The adaptation law is developed by extending CL to the NIP structure of the
DNN and augmenting the resulting update with higher-order momentum dynamics.
The shorthand
$\mathcal{J}(X,\hat{\theta})$ denotes the DNN Jacobian at the current state and
weight estimates, and $\mathcal{J}_i\triangleq\mathcal{J}(X_i,\hat{\theta})$
denotes the Jacobian evaluated using the current weight estimates and
the previously recorded state $X_i\triangleq X(t_i)$. Compared to CL
approaches that begin gathering data immediately, the history stack
$\mathcal{H}=\{(X_i,u_i,\hat{\Delta}_i)\}_{i=1}^{N}$, with
$\hat{\Delta}_i$ the observer estimate at the recording time, is
constructed only for
$t_i\ge t_\Delta$, where the observer settling time $t_\Delta$ is
characterized in Lemma~\ref{lem:observer}. Using the observer estimate
$\hat{\Delta}_i\triangleq\hat{\Delta}(t_i)$, the control input at each
recorded point is reconstructed as
\begin{equation}
\hat{u}_i = \hat{\Delta}_i - \Psi(X_i,\hat{\theta}).
\label{eq:u_hat_i}
\end{equation}
 
The implementable form of the NACL update law is designed as
\begin{align}
\dot{\Lambda} &= \proj\left\{\Gamma\left[\mathcal{J}^{\top}(X,\hat{\theta})r
   -\frac{\gamma_1}{N}\sum_{i=1}^{N}\mathcal{J}_i^{\top}\big(u_i-\hat{u}_i\big)
   -\gamma_2\hat{\theta}\right]\right\},
\label{eq:lambda_dot}\\
\dot{\hat{\theta}} &= \proj\Big\{-\beta\,\Gamma\big(\hat{\theta}-\Lambda\big)
   \mathcal{N}(t)\Big\},
\label{eq:theta_hat_dot}
\end{align}
where $\Lambda\in\R^{n_\theta}$ is an auxiliary momentum variable,
$\beta,\gamma_1,\gamma_2\in\R_{>0}$ are user-selected adaptation
gains, and $\Gamma(t)\in\R^{n_\theta\times n_\theta}$ is a
positive-definite time-varying least-squares adaptation gain. The update in
\eqref{eq:lambda_dot} performs the gradient-like adjustment using the
instantaneous tracking error and the recorded data, whereas
\eqref{eq:theta_hat_dot} drives the weights toward $\Lambda$ through a
normalized first-order filter; this is the Nesterov-inspired momentum
structure. The projection operator confines both $\Lambda$ and
$\hat{\theta}$ to $\Theta$, which renders the DNN bounds used in
Lemma~\ref{lem:fdot} and in Section~\ref{sec:stability} uniform and yields
$\hat{\theta}\in\mathcal{L}_\infty$.

The projection operator satisfies the property~\cite[App.~E]{Krstic1995} that, for the projected estimate
$\psi\in\{\Lambda,\hat{\theta}\}$, any $\xi\in\Theta$, and any
$y\in\mathbb{R}^{n_\theta}$,
\begin{equation}
(\psi-\xi)^{\top}\Gamma^{-1}
   \big(\operatorname{proj}\{\Gamma y\}-\Gamma y\big)\le 0 .
\label{eq:proj_property}
\end{equation}
 
The state-dependent normalization signal in \eqref{eq:theta_hat_dot}
is defined as
\begin{equation}
\mathcal{N}(t) \triangleq 1+\mu_1\norm{\mathcal{J}(X,\hat{\theta})}_{2}^{2}
   +\mu_2\operatorname{tr}\big(\Phi_{\mathcal{J}}\big),
\label{eq:norm_factor}
\end{equation}
where $\norm{\cdot}_2$ is the induced $2$-norm, $\mu_1,\mu_2\in\R_{>0}$,
and $\Phi_{\mathcal{J}}$ is the averaged Jacobian history stack
\begin{equation}
\Phi_{\mathcal{J}}(t) \triangleq
   \frac{1}{N}\sum_{i=1}^{N}\mathcal{J}^{\top}(X_i,\hat{\theta})\mathcal{J}(X_i,\hat{\theta}).
\label{eq:phi_j}
\end{equation}
The two state-dependent terms in \eqref{eq:norm_factor} are what allow the
higher-order cross term to be dominated, as made precise by the gain
conditions in Section~\ref{sec:stability}.
 
The adaptation gain evolves according to the guarded least-squares law
with forgetting \cite{Hart2025Concurrent}
\begin{equation}
\frac{d}{dt}\Gamma^{-1}(t)=
\begin{cases}
-\kappa\,\Gamma^{-1}(t)+\nu\gamma_1\Phi_{\mathcal{J}}(t), 
 & \text{if }\lambda_{\Gamma,\min}<\lammin{\Gamma}\\[-3pt]
&\hspace{-3em}\text{and }\lammax{\Gamma}<\lambda_{\Gamma,\max},\\[5pt]
0, & \text{otherwise},
\end{cases}
\label{eq:gain_law}
\end{equation}
where $\kappa\in\R_{>0}$ is a forgetting factor, $0<\nu<2$, and
$\lambda_{\Gamma,\min},\lambda_{\Gamma,\max}\in\R_{>0}$ are
user-selected bounds on the spectrum of $\Gamma$ with
$\Gamma(t_0)$ initialized so that
$\lambda_{\Gamma,\min}I\preceq\Gamma(t_0)\preceq\lambda_{\Gamma,\max}I$.
Freezing the update outside the band prevents covariance windup during
intervals of low excitation, where $\Phi_{\mathcal{J}}$ is rank deficient
and the forgetting term alone would drive $\Gamma^{-1}$ toward zero, and it
ensures
\begin{equation}
0<\lambda_{\Gamma,\min}I\preceq\Gamma(t)\preceq\lambda_{\Gamma,\max}I,
\qquad \forall\,t\ge t_0,
\label{eq:gain_band}
\end{equation}
by construction. In implementation, the two terms are faded continuously
rather than switched: the forgetting term is scaled by
$\max\{0,\,1-\lambda_{\max}(\Gamma)/\lambda_{\Gamma,\max}\}$ and the
history-stack term by
$\min\{1,\,\max\{0,\,\lambda_{\min}(\Gamma)/\lambda_{\Gamma,\min}-1\}\}$;
the subsequent analysis covers this variant, since it only requires the
branch weights to lie in $[0,1]$. Since $\nu$ and $\gamma_1$ enter
\eqref{eq:gain_law} only through the product $\nu\gamma_1$, the bound
on $\nu$ places no restriction on the CL gain; the choice $\nu=1$
recovers the covariance law used in \cite{Hart2025Concurrent}. For
notational convenience let $s(t)\in\{0,1\}$ denote the indicator of
the active branch of \eqref{eq:gain_law}, so that
$\frac{d}{dt}\Gamma^{-1}=s(t)\big[-\kappa\Gamma^{-1}
+\nu\gamma_1\Phi_{\mathcal{J}}\big]$. All subsequent identities hold
almost everywhere, with $\Gamma$ continuous and without gain resets.
 
An analytical form of \eqref{eq:lambda_dot} is obtained from the
observer construction. Using \eqref{eq:u_hat_i} with the DNN model in
\eqref{eq:dnn_approx} and the Taylor expansion in
\eqref{eq:taylor_expansion}, the reconstruction error at each recorded
point satisfies $u_i-\hat{u}_i=\tilde{\Delta}_i-\mathcal{J}_i\tht
-R_i-\varepsilon_i$. Multiplying by $-\gamma_1\mathcal{J}_i^{\top}/N$
and summing, and noting that $\tht$ carries no index and therefore
factors out of the sum, yields
\begin{equation}
-\frac{\gamma_1}{N}\sum_{i=1}^{N}\mathcal{J}_i^{\top}\big(u_i-\hat{u}_i\big)
 = \gamma_1\Phi_{\mathcal{J}}\tht+\gamma_1 d_R+\gamma_1 d_\varepsilon,
\label{eq:cl_decomp}
\end{equation}
where $d_R\triangleq\frac{1}{N}\sum_{i=1}^{N}\mathcal{J}_i^{\top}R(X_i,\tht)$
and $d_\varepsilon\triangleq\frac{1}{N}\sum_{i=1}^{N}\mathcal{J}_i^{\top}
(\varepsilon_i-\tilde{\Delta}_i)$. Consequently
$\dot{\Lambda}=\proj\{\Gamma G\}$ with
\begin{equation}
G \triangleq \mathcal{J}^{\top}(X,\hat{\theta})r
  +\gamma_1\Phi_{\mathcal{J}}\tht
  +\gamma_1 d_R+\gamma_1 d_\varepsilon-\gamma_2\hat{\theta}.
\label{eq:G_def}
\end{equation}
Equation \eqref{eq:cl_decomp} is the step that makes CL applicable to a DNN:
the recorded-data term resolves into a positive semidefinite history stack acting
on $\tht$, plus two disturbances arising from the Taylor remainder and from
the approximation and observer errors, respectively.
 
Because the update evolves in two coupled parameter states, the
analysis is performed in the coordinates
\begin{equation}
\eta \triangleq \hat{\theta}-\Lambda,
\qquad
\vartheta \triangleq \ths-\Lambda,
\label{eq:hot_coords}
\end{equation}
so that $\tht=\vartheta-\eta$. Here $\eta$ quantifies the lag of the
weight estimates behind the momentum variable and $\vartheta$ the
distance of the momentum variable from the ideal weights. Since
$\ths$ is constant, \eqref{eq:lambda_dot} and
\eqref{eq:theta_hat_dot} give
\begin{equation}
\dot{\vartheta} = -\proj\{\Gamma G\},
\qquad
\dot{\eta} = \proj\big\{-\beta\Gamma\eta\mathcal{N}\big\}
            -\proj\{\Gamma G\}.
\label{eq:coord_dyn}
\end{equation}
 
\begin{assumption}[Finite Excitation]\label{ass:fe}
There exist $\lambda_e\in\R_{>0}$ and $\mathcal{T}>t_\Delta$ such that
the current-weight CL history stack satisfies
\begin{equation}
\lammin{\Phi_{\mathcal{J}}(t)}\ge\lambda_e,
\qquad \forall\, t\ge\mathcal{T}.
\label{eq:fe_assumption}
\end{equation}
\end{assumption}
 
Assumption~\ref{ass:fe} is imposed on the retained samples $\{X_i\}$,
which are recorded over a finite interval and re-evaluated through
$\mathcal{J}(X_i,\hat{\theta})$ at the current weights, in contrast to a
PE condition, which would require excitation of the closed-loop
trajectory for all time.

\begin{lemma}\label{lem:data}
Let $t'\ge t_\Delta$ and suppose that $\zeta(t)\in\mathcal{D}$ for all
$t\in[t_0,t']$. Then every sample retained at a time
$t_i\in[t_\Delta,t']$ satisfies
\begin{equation}
\norm{\tilde{\Delta}(t_i)}\le\delta_\Delta .
\label{eq:data_accuracy}
\end{equation}
\end{lemma}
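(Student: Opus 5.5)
The plan is to reduce Lemma~\ref{lem:data} to Lemma~\ref{lem:observer} applied pointwise at each recording instant. Fix a retained sample with recording time $t_i\in[t_\Delta,t']$. Since $t_i\le t'$, the hypothesis gives $\zeta(\tau)\in\mathcal{D}$ for all $\tau\in[t_0,t_i]$. This is exactly the standing condition of Lemma~\ref{lem:observer} on the interval $[t_0,t_i]$. Together with the initialization $\hat{r}(t_0)=r(t_0)$ and the feasibility gain condition \eqref{eq:feasibility}, Lemma~\ref{lem:observer} with $t=t_i$ then yields $\|z(t_i)\|\le\delta_\Delta$, because $t_i\ge t_\Delta$.

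The conclusion follows from the definition $z=[\tilde{r}^\top,\tilde{\Delta}^\top]^\top$. This gives $\|\tilde{\Delta}(t_i)\|\le\|z(t_i)\|\le\delta_\Delta$, which is \eqref{eq:data_accuracy}. The stored quantity $\hat{\Delta}_i=\hat{\Delta}(t_i)$ in $\mathcal{H}$ is fixed once recorded. The error $\tilde{\Delta}_i$ appearing in $d_\varepsilon$ is therefore exactly $\tilde{\Delta}(t_i)$, so the bound transfers to the history stack regardless of any later behavior of the trajectory.

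The only step needing care is the use of Lemma~\ref{lem:observer}: its hypothesis must hold on $[t_0,t_i]$ and not merely at $t_i$. I would therefore state explicitly that the observer error dynamics \eqref{eq:obs_err_r}--\eqref{eq:obs_err_delta} are driven only by $\dot f$. By Lemma~\ref{lem:fdot}, $\dot f$ is bounded by $\delta_f$ whenever $\zeta\in\mathcal{D}$. The comparison or Lyapunov argument behind Lemma~\ref{lem:observer} is causal, so it uses only the past of the trajectory up to the evaluation time, and this is why membership on $[t_0,t']$ suffices.

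I would also note that $t_\Delta$ depends only on known constants ($\Lambda_0,k_\Delta,\delta_f,\delta_\Delta,\bar{\Delta}_0$) and not on $t'$. This makes the bound uniform across all samples.
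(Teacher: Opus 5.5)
Your proposal is correct and matches the paper's proof: both use Lemma~\ref{lem:fdot} to bound $\dot f$ by $\delta_f$ while $\zeta\in\mathcal{D}$, then invoke Lemma~\ref{lem:observer} to get $\norm{z}\le\delta_\Delta$ after $t_\Delta$, and conclude from $\norm{\tilde{\Delta}(t_i)}\le\norm{z(t_i)}$ together with the retention rule $t_i\ge t_\Delta$. The only differences are cosmetic: you apply the observer lemma pointwise on $[t_0,t_i]$ rather than once on all of $[t_\Delta,t']$, and you state explicitly the standing hypotheses ($\hat{r}(t_0)=r(t_0)$ and \eqref{eq:feasibility}) that the paper leaves implicit.
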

\begin{proof}
Since $\zeta(t)\in\mathcal{D}$ on $[t_0,t']$, Lemma~\ref{lem:fdot}
bounds $\dot{f}$ by $\delta_f$ on that interval, so Lemma~\ref{lem:observer}
applies and yields $\norm{z(t)}\le\delta_\Delta$ for all
$t\in[t_\Delta,t']$. The claim follows from
$\norm{\tilde{\Delta}(t_i)}\le\norm{z(t_i)}$ and the retention rule
$t_i\ge t_\Delta$.
\end{proof}

Theorem~\ref{thm:main} establishes the invariance of $\mathcal{D}$ from
$t_0$ by a first-exit argument, so no separate assumption on the recorded
data is required; the retention rule $t_i\ge t_\Delta$ is the only condition
enforced in implementation.

\section{Stability Analysis}\label{sec:stability}
 
Using the structure of the DNN, the following bounds are established.
By \cite[Lemma~2]{Patil2026Bounds}, the Jacobian is bounded on
$\Theta$ and the corresponding compact input domain by a constant
$\bar{\mathcal{J}}\in\R_{>0}$, so that
$\norm{\mathcal{J}(X,\hat{\theta})}\le\bar{\mathcal{J}}$ and
$\norm{\mathcal{J}_i}\le\bar{\mathcal{J}}$. By
\cite[Theorem~1]{Patil2026Bounds}, which follows from the DNN Hessian
bound therein, the Lagrange remainder in
\eqref{eq:taylor_expansion} satisfies
\begin{align}
\norm{R(X,\tht)}\le\rho_0(\norm{X})\norm{\tht}^{2},\label{eq:R_bound_a}\\
\quad
\rho_0(\norm{X})=a_2\norm{X}^{2}+a_1\norm{X}+a_0,
\label{eq:R_bound}
\end{align}
with $a_2,a_1,a_0\in\R_{>0}$ computable from the layer widths, the
activation constants, and $\bar{\theta}$. Because $\ths,\hat{\theta}\in\Theta$
and $\Theta$ is convex, the Taylor segment from $\hat{\theta}$ to $\ths$ lies
in $\Theta$. On $\mathcal{D}$, the state bounds established in the proof of
Lemma~\ref{lem:fdot} give $\norm{X}\le c_0+c_1\norm{\zeta}$ with
$c_0=\bar{x}_d+\bar{\dot{x}}_d$ and $c_1=2+\alpha_1$, so that
\eqref{eq:R_bound_a} and \eqref{eq:R_bound} yield
\begin{equation}
\norm{R}\le\bar{\rho}_0(\chi)\norm{\tht}^{2},
\ \
\bar{\rho}_0(\chi)\triangleq a_2(c_0+c_1\chi)^{2}
   +a_1(c_0+c_1\chi)+a_0 .
\label{eq:R_bound_domain}
\end{equation}
Whenever every retained sample was recorded while $\zeta\in\mathcal{D}$,
so that $\norm{X_i}\le c_0+c_1\chi$ and, by Lemma~\ref{lem:data},
\eqref{eq:data_accuracy} holds, the averaged stack disturbances
satisfy
\begin{align}
\norm{d_R}&\le C_R\norm{\tht}^{2},
&& C_R\triangleq \bar{\mathcal{J}}\bar{\rho}_0(\chi),
\label{eq:dR_bound}\\
\norm{d_\varepsilon}&\le\bar{d}_\Phi,
&& \bar{d}_\Phi\triangleq \bar{\mathcal{J}}
   \big(\bar{\varepsilon}+\delta_\Delta\big).
\label{eq:deps_bound}
\end{align}

Consider the candidate Lyapunov function
$V:\R^{\psi}\times\R_{\ge0}\to\R_{\ge0}$ defined as
\begin{equation}
V(\zeta,t)\triangleq\tfrac12 r^{\top}r+\tfrac12 e^{\top}e
 +\tfrac12\vartheta^{\top}\Gamma^{-1}\vartheta
 +\tfrac12\eta^{\top}\Gamma^{-1}\eta .
\label{eq:lyap}
\end{equation}
Because $\Gamma$ is time-varying, converting a bound on $V$ into a
bound on $\zeta$ requires uniform spectral bounds on the gain. These
are supplied directly by the guard in \eqref{eq:gain_law}, which
enforces \eqref{eq:gain_band}, whence
\begin{equation}
\beta_1\norm{\zeta}^{2}\le V(\zeta,t)\le\beta_2\norm{\zeta}^{2},
\label{eq:lyap_sandwich}
\end{equation}
with $\beta_1\triangleq\frac12\min\{1,\lambda_{\Gamma,\max}^{-1}\}$
and $\beta_2\triangleq\frac12\max\{1,\lambda_{\Gamma,\min}^{-1}\}$.
 
Let $c_\nu\triangleq1-\nu/2>0$ and $d_\nu\triangleq1+\nu/2$, and
select the gains to satisfy
\begin{equation}
\mu_1\ge\frac{2}{\beta k_1},
\qquad
\mu_2\ge\frac{d_\nu\gamma_1}{\beta},
\qquad
\beta>\gamma_2 .
\label{eq:gain_conditions}
\end{equation}
Define $A_0\triangleq\gamma_2$,
$A\triangleq c_\nu\gamma_1\lambda_e+\gamma_2$,
$B\triangleq\beta-\gamma_2$,
$D_0\triangleq\gamma_1\bar{d}_\Phi+\gamma_2\bar{\theta}$, and, for
$a\in\{A_0,A\}$,
\begin{align}
\rho_q(a)&\triangleq\frac{2\bar{\rho}_0(\chi)^{2}}{k_1}
  +\gamma_1^{2}C_R^{2}\left(\frac{1}{a}+\frac{1}{B}\right),
\label{eq:rho_q}\\
\iota(a)&\triangleq\frac{2\bar{\varepsilon}^{2}}{k_1}
  +D_0^{2}\left(\frac{1}{a}+\frac{1}{B}\right),
\label{eq:iota}
\end{align}
together with
$\lambda(a)\triangleq\min\{k_1/4,\ \alpha_1,\ a/2,\ B/2\}$. The
constants indexed by $A_0$ govern the interval before the excitation
condition is met, when the $\vartheta$ channel is damped by the leakage
gain alone, and those indexed by $A$ govern the interval after it.
Since $A\ge A_0$, it follows that $\rho_q(A)\le\rho_q(A_0)$,
$\iota(A)\le\iota(A_0)$, and $\lambda(A)\ge\lambda(A_0)$. For a
selected convergence rate $0<\lambda_d<\lambda(A_0)$, let the domain
radius satisfy $4\rho_q(A_0)\chi^{2}\le\lambda(A_0)-\lambda_d$, define
$\lambda_{\mathcal{T}}\triangleq\lambda(A)-4\rho_q(A)\chi^{2}
\ge\lambda_d$, and let the set of stabilizing initial conditions be
\begin{equation}
\mathcal{S}\triangleq\Big\{\zeta\in\R^{\psi}:
  \norm{\zeta}<\sqrt{\beta_1/\beta_2}\,\chi\Big\}.
\label{eq:S_set}
\end{equation}
 
\begin{theorem}\label{thm:main}
Consider the system in \eqref{eq:sys_dyn} with the control input
\eqref{eq:control_input}, the observer in \eqref{eq:obs_r} and
\eqref{eq:obs_delta}, and the adaptation laws
\eqref{eq:lambda_dot} and \eqref{eq:theta_hat_dot} with the gain law
\eqref{eq:gain_law}. Let the feasibility condition
\eqref{eq:feasibility} and the gain conditions
\eqref{eq:gain_conditions} be satisfied, let Assumption~\ref{ass:fe}
hold, let $\|\tilde{\Delta}(t_0)\|\le\bar{\Delta}_0$, and let
$\zeta(t_0)\in\mathcal{S}$ with
$\beta_2\iota(A_0)/\lambda_d<\beta_1\chi^{2}$. Then
$\zeta(t)\in\mathcal{D}$ for all $t\ge t_0$, every retained sample
satisfies \eqref{eq:data_accuracy}, the observer error satisfies
$\norm{z(t)}\le\delta_\Delta$ for all $t\ge t_\Delta$, and the composite error
vector is bounded in the sense that
\begin{align}
\norm{\zeta(t)}&\le
 \sqrt{\frac{\beta_2}{\beta_1}\,\norm{\zeta(t_0)}^{2}\,
       e^{-\frac{\lambda_d}{\beta_2}(t-t_0)}
      +\frac{\beta_2\,\iota(A_0)}{\beta_1\lambda_d}},
\quad t\ge t_0,
\label{eq:main_bound_pre}\\
\norm{\zeta(t)}&\le
 \sqrt{\frac{\beta_2}{\beta_1}\,\norm{\zeta(\mathcal{T})}^{2}\,
       e^{-\frac{\lambda_{\mathcal{T}}}{\beta_2}(t-\mathcal{T})}
      +\frac{\beta_2\,\iota(A)}{\beta_1\lambda_{\mathcal{T}}}},
\quad t\ge\mathcal{T}.
\label{eq:main_bound}
\end{align}
\end{theorem}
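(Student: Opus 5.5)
Differentiate the Lyapunov function \eqref{eq:lyap} along the closed-loop trajectories while $\zeta\in\mathcal{D}$, obtain a differential inequality of the form $\dot V\le-\lambda_d\norm{\zeta}^2+\iota(A_0)$ on $[t_0,\infty)$ and $\dot V\le-\lambda_{\mathcal{T}}\norm{\zeta}^2+\iota(A)$ on $[\mathcal{T},\infty)$, and close with a first-exit argument for the invariance of $\mathcal{D}$.

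\emph{Tracking channel.} Using \eqref{eq:r_dot} and $\dot e=r-\alpha_1 e$, the $r,e$ part gives
\[
-k_1\norm{r}^2-\alpha_1\norm{e}^2+r^\top\mathcal{J}\tht+r^\top(R+\varepsilon).
\]
Young's inequality with \eqref{eq:R_bound_domain} yields
\[
r^\top(R+\varepsilon)\le \tfrac{k_1}{4}\norm{r}^2+\tfrac{2\bar\rho_0^2}{k_1}\norm{\tht}^4+\tfrac{2\bar\varepsilon^2}{k_1},
\]
which accounts for the first terms of $\rho_q$ and $\iota$.

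\emph{Parameter channels.} For the $\vartheta,\eta$ part, first use the projection property \eqref{eq:proj_property}, with $\xi=\ths$ for $\Lambda$, to drop the projection in $\dot\vartheta$. For $\eta$, write the projection of the difference as the difference of projections, handled term by term with $\xi$ chosen in $\Theta$. This caveat needs care, and I would treat it as implemented by separate projections. The time-varying gain contributes
\[
\tfrac12 s\,\vartheta^\top(-\kappa\Gamma^{-1}+\nu\gamma_1\Phi_{\mathcal{J}})\vartheta
\]
and likewise for $\eta$, where the $-\kappa$ parts are nonpositive. With $\tht=\vartheta-\eta$, the term $-\vartheta^\top G$ produces $-\vartheta^\top\mathcal{J}^\top r$. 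This cancels the $\vartheta$ part of $r^\top\mathcal{J}\tht$ and leaves the cross term $-r^\top\mathcal{J}\eta$. The same $-\eta^\top G$ term appears in the $\eta$ channel and gives a further $\eta^\top\mathcal{J}^\top r$.

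\emph{Dominating the cross terms.} The normalization term supplies $-\beta\mathcal{N}\norm{\eta}^2$. Bound $2|r^\top\mathcal{J}\eta|\le\tfrac{k_1}{4}\norm r^2+\tfrac{4}{k_1}\norm{\mathcal{J}}^2\norm\eta^2$, which is absorbed by $\beta\mu_1\norm{\mathcal{J}}^2\norm\eta^2$ using $\mu_1\ge 2/(\beta k_1)$. The CL contributions are
\[
-\gamma_1\vartheta^\top\Phi_{\mathcal{J}}(\vartheta-\eta)+\gamma_1\eta^\top\Phi_{\mathcal{J}}(\vartheta-\eta),
\]
together with the gain-derivative term $\tfrac{\nu\gamma_1}{2}(\vartheta^\top\Phi\vartheta+\eta^\top\Phi\eta)$. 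Combining gives $-c_\nu\gamma_1\vartheta^\top\Phi\vartheta$ plus cross terms in $\eta$ bounded by $d_\nu\gamma_1\operatorname{tr}(\Phi_{\mathcal{J}})\norm\eta^2$ after Young's inequality. These are absorbed by $\beta\mu_2\operatorname{tr}(\Phi)\norm\eta^2$ using $\mu_2\ge d_\nu\gamma_1/\beta$. This is the step I expect to be the main obstacle: the constants must be split so that exactly $c_\nu\gamma_1\lambda_e$ survives on $\vartheta$ once $t\ge\mathcal{T}$, using Assumption~\ref{ass:fe}, while only $\gamma_2$ survives before $\mathcal{T}$.

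\emph{Leakage and disturbances.} The leakage $\gamma_2(\vartheta-\eta)^\top\hat\theta$ is written with $\hat\theta=\ths-\tht$. This yields $-\gamma_2\norm{\vartheta}^2$-type damping plus a $\gamma_2\bar\theta$ disturbance, with $\beta-\gamma_2=B$ damping $\eta$. The disturbances $\gamma_1 d_R$ and $\gamma_1 d_\varepsilon+\gamma_2$-terms are then bounded via \eqref{eq:dR_bound}, \eqref{eq:deps_bound}, and Lemma~\ref{lem:data}. Young's inequality with weights $a/2$ and $B/2$ produces $\gamma_1^2C_R^2(1/a+1/B)\norm{\tht}^4$ and $D_0^2(1/a+1/B)$. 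Using $\norm{\tht}^4\le 4\norm{\zeta}^2\chi^2\norm{\zeta}^2$-type bounds, namely $\norm\tht^2\le2\norm\zeta^2<2\chi^2$, gives
\[
\dot V\le-(\lambda(a)-4\rho_q(a)\chi^2)\norm\zeta^2+\iota(a).
\]

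\emph{Closing the argument.} Applying \eqref{eq:lyap_sandwich} yields $\dot V\le-\tfrac{\lambda_d}{\beta_2}V+\iota(A_0)$. The comparison lemma then gives \eqref{eq:main_bound_pre}, and restarting at $\mathcal{T}$ gives \eqref{eq:main_bound}. For invariance, let $t^*$ be the first time $\norm{\zeta}=\chi$. On $[t_0,t^*]$ all the above bounds hold, including Lemma~\ref{lem:observer}, Lemma~\ref{lem:data}, and the sample accuracy. Then \eqref{eq:main_bound_pre}, together with $\zeta(t_0)\in\mathcal{S}$ and $\beta_2\iota(A_0)/\lambda_d<\beta_1\chi^2$, forces $\norm{\zeta(t^*)}<\chi$, a contradiction. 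The observer and data claims then follow from Lemmas~\ref{lem:observer} and \ref{lem:data} on $[t_0,\infty)$.
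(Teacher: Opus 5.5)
\textbf{Gap at the concurrent-learning step, caused by a sign error in the $\eta$ channel.} Your overall structure is the paper's: first-exit time, projection property channel by channel, the $s(t)$-weighted gain-variation term, Young's inequality, and the comparison lemma. The step you flag as the main obstacle fails as written. By \eqref{eq:coord_dyn}, $\dot\eta$ contains $-\proj\{\Gamma G\}$, so the adaptation terms give $-\beta\mathcal{N}\norm{\eta}^2-(\vartheta+\eta)^\top G$. Your CL expression instead equals $-\gamma_1\tht^\top\Phi_{\mathcal{J}}\tht$. With it, your claimed bound is false. For $s=1$, adding the gain-variation term leaves
\[
-c_\nu\gamma_1\vartheta^\top\Phi_{\mathcal{J}}\vartheta+2\gamma_1\vartheta^\top\Phi_{\mathcal{J}}\eta-c_\nu\gamma_1\eta^\top\Phi_{\mathcal{J}}\eta .
\]
The middle term is linear in $\vartheta$, so it cannot be bounded by any multiple of $\operatorname{tr}(\Phi_{\mathcal{J}})\norm{\eta}^2$. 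Every Young split consumes part of the $\vartheta$ damping, so $A=c_\nu\gamma_1\lambda_e+\gamma_2$ and $\mu_2\ge d_\nu\gamma_1/\beta$ cannot both come out.

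The missing observation is that with the correct sign there is nothing to split. Since $\Phi_{\mathcal{J}}$ is symmetric,
\[
-\gamma_1(\vartheta+\eta)^\top\Phi_{\mathcal{J}}(\vartheta-\eta)=-\gamma_1\vartheta^\top\Phi_{\mathcal{J}}\vartheta+\gamma_1\eta^\top\Phi_{\mathcal{J}}\eta ,
\]
so the cross terms cancel exactly. Adding $\tfrac{s\nu\gamma_1}{2}(\vartheta^\top\Phi_{\mathcal{J}}\vartheta+\eta^\top\Phi_{\mathcal{J}}\eta)$ then gives coefficient $-\gamma_1(1-s\nu/2)\le-c_\nu\gamma_1$ on $\vartheta$ and $\gamma_1(1+s\nu/2)\le d_\nu\gamma_1$ on $\eta$, on both branches, with no Young step. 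The same correction applies elsewhere:
\begin{itemize}
\item The leakage contributes $\gamma_2(\vartheta+\eta)^\top\hat\theta=-\gamma_2\norm{\vartheta}^2+\gamma_2\norm{\eta}^2+\gamma_2(\vartheta+\eta)^\top\ths$. This, not your $\gamma_2(\vartheta-\eta)^\top\hat\theta=-\gamma_2\norm{\tht}^2+\gamma_2\tht^\top\ths$, is what produces $-\gamma_2\norm{\vartheta}^2$ and the $+\gamma_2\norm{\eta}^2$ cost absorbed by $B=\beta-\gamma_2$.
\item In the $\mathcal{J}^\top r$ term, the $\eta$ channel contributes $-\eta^\top\mathcal{J}^\top r$, not $+\eta^\top\mathcal{J}^\top r$. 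That is what yields the residual $-2\eta^\top\mathcal{J}^\top r$ you then use.
\end{itemize}

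\textbf{Two smaller repairs.}
\begin{itemize}
\item \emph{Young split for the cross term.} Your split $2|r^\top\mathcal{J}\eta|\le\tfrac{k_1}{4}\norm{r}^2+\tfrac{4}{k_1}\norm{\mathcal{J}}^2\norm{\eta}^2$ requires $\mu_1\ge 4/(\beta k_1)$, twice the hypothesis. Use $\tfrac{k_1}{2}\norm{r}^2+\tfrac{2}{k_1}\norm{\mathcal{J}}^2\norm{\eta}^2$ instead. Together with the $k_1/8+k_1/8$ spent on $R$ and $\varepsilon$, this leaves exactly the $k_1/4$ appearing in $\lambda(a)$.
\item \emph{First-exit contradiction.} It cannot be read off \eqref{eq:main_bound_pre}: its two terms are each below $\chi^2$, but their sum need not be. Keep the comparison-lemma bound in the form
\[
V(t)\le V(t_0)e^{-\lambda_d(t-t_0)/\beta_2}+\tfrac{\beta_2\iota(A_0)}{\lambda_d}\big(1-e^{-\lambda_d(t-t_0)/\beta_2}\big).
\]
This is a convex combination of two numbers below $\beta_1\chi^2$, and you then conclude via $\beta_1\norm{\zeta}^2\le V$.
\end{itemize}
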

 
\begin{proof}
Let $t_e\triangleq\sup\{t\ge t_0:\zeta(\tau)\in\mathcal{D}
\ \forall\,\tau\in[t_0,t]\}$. Since $\zeta(t_0)\in\mathcal{S}
\subset\mathcal{D}$ and $\zeta$ is continuous, $t_e>t_0$. Throughout
$[t_0,t_e)$, $\zeta\in\mathcal{D}$, so by Lemma~\ref{lem:data} every
sample retained before $t_e$ satisfies \eqref{eq:data_accuracy}, and
the bounds \eqref{eq:R_bound_domain}--\eqref{eq:deps_bound} hold. The
derivative of $V$ is now bounded on $[t_0,t_e)$.
Let $P\triangleq\Gamma^{-1}$ and
$V_\theta\triangleq\tfrac12\vartheta^{\top}P\vartheta
+\tfrac12\eta^{\top}P\eta$, so that $V=\tfrac12\norm{r}^{2}
+\tfrac12\norm{e}^{2}+V_\theta$. Since $P$ is symmetric and
time-varying, the product rule applied to \eqref{eq:lyap} yields
\begin{align}
\dot{V}={}&r^{\top}\dot{r}+e^{\top}\dot{e}
 +\underbrace{\vartheta^{\top}P\dot{\vartheta}
              +\eta^{\top}P\dot{\eta}}_{\text{adaptation}}
 +\underbrace{\tfrac12\vartheta^{\top}\dot{P}\vartheta
              +\tfrac12\eta^{\top}\dot{P}\eta}_{\text{gain variation}}.
\label{eq:vdot_exact}
\end{align}
The four groups in \eqref{eq:vdot_exact} are evaluated in turn.

Substituting \eqref{eq:r_dot}, $\dot{e}=r-\alpha_1 e$, the scalars $-r^{\top}e$ and $e^{\top}r$
cancel, leaving
\begin{equation}
r^{\top}\dot{r}+e^{\top}\dot{e}
 = r^{\top}\mathcal{J}\tht+r^{\top}R+r^{\top}\varepsilon
   -k_1\norm{r}^{2}-\alpha_1\norm{e}^{2}.
\label{eq:vdot_track}
\end{equation}

Applying the projection property
\eqref{eq:proj_property} to each contribution in
\eqref{eq:coord_dyn}, with $\xi=\ths$ for the $\vartheta$ channel and
with $\xi=\Lambda$ and $\xi=\hat{\theta}$ for the two contributions to
the $\eta$ channel, all of which lie in $\Theta$
\begin{equation}
\vartheta^{\top}P\dot{\vartheta}+\eta^{\top}P\dot{\eta}
 \le-\beta\mathcal{N}\norm{\eta}^{2}-(\vartheta+\eta)^{\top}G .
\label{eq:vdot_adapt}
\end{equation}

Substituting \eqref{eq:gain_law}
\begin{equation}
\tfrac12\vartheta^{\top}\dot{P}\vartheta
+\tfrac12\eta^{\top}\dot{P}\eta
 = s(t)\Big[-\kappa V_\theta+\tfrac{\nu\gamma_1}{2}
   \big(\vartheta^{\top}\Phi_{\mathcal{J}}\vartheta
        +\eta^{\top}\Phi_{\mathcal{J}}\eta\big)\Big].
\label{eq:vdot_gain}
\end{equation}

Substituting \eqref{eq:vdot_track}--\eqref{eq:vdot_gain} into
\eqref{eq:vdot_exact} gives the first bound
\begin{align}
\dot{V}\le{}&-k_1\norm{r}^{2}-\alpha_1\norm{e}^{2}
  -\beta\mathcal{N}\norm{\eta}^{2}
  +r^{\top}\mathcal{J}\tht+r^{\top}R+r^{\top}\varepsilon\notag\\
&-(\vartheta+\eta)^{\top}G
 +s(t)\Big[-\kappa V_\theta+\tfrac{\nu\gamma_1}{2}
   \big(\vartheta^{\top}\Phi_{\mathcal{J}}\vartheta
        +\eta^{\top}\Phi_{\mathcal{J}}\eta\big)\Big].
\label{eq:vdot_1}
\end{align}

Using $\vartheta+\eta=\tht+2\eta$ and expanding $G$ from
\eqref{eq:G_def}, the term $-\tht^{\top}\mathcal{J}^{\top}r$ cancels
$r^{\top}\mathcal{J}\tht$ in \eqref{eq:vdot_1}, since both are equal
scalars, and the residue is the higher-order cross term
$-2\eta^{\top}\mathcal{J}^{\top}r$. Using the identity
$\tht^{\top}\Phi_{\mathcal{J}}\tht
+2\eta^{\top}\Phi_{\mathcal{J}}\tht
=\vartheta^{\top}\Phi_{\mathcal{J}}\vartheta
-\eta^{\top}\Phi_{\mathcal{J}}\eta$, the concurrent-learning
quadratics in $-(\vartheta+\eta)^{\top}G$ and the gain-variation terms
in \eqref{eq:vdot_1} collect into the single quantity
\begin{equation}
\Xi_s\triangleq-\gamma_1\big(1-\tfrac{s\nu}{2}\big)
   \vartheta^{\top}\Phi_{\mathcal{J}}\vartheta
+\gamma_1\big(1+\tfrac{s\nu}{2}\big)
   \eta^{\top}\Phi_{\mathcal{J}}\eta-s\kappa V_\theta ,
\label{eq:branch_terms}
\end{equation}
which is bounded on both branches of \eqref{eq:gain_law} by a single
$s$-free expression: since $0<\nu<2$ gives $c_\nu<1<d_\nu$, since
$\Phi_{\mathcal{J}}\succeq0$, and since $-s\kappa V_\theta\le0$,
\begin{equation}
\Xi_s\le
-c_\nu\gamma_1\vartheta^{\top}\Phi_{\mathcal{J}}\vartheta
+d_\nu\gamma_1\eta^{\top}\Phi_{\mathcal{J}}\eta,
\qquad s\in\{0,1\}.
\label{eq:branch_bound}
\end{equation}
The $\kappa$ term is dropped because it vanishes when $s=0$ and cannot
contribute a uniform decay rate. Defining the aggregate disturbance
$d\triangleq\gamma_1 d_\varepsilon-\gamma_2\ths$ and substituting
\eqref{eq:branch_bound} into \eqref{eq:vdot_1} yields the second
bound,
\begin{align}
\dot{V}\le{}&-k_1\norm{r}^{2}-\alpha_1\norm{e}^{2}
  -\beta\mathcal{N}\norm{\eta}^{2}
  -2\eta^{\top}\mathcal{J}^{\top}r
  +r^{\top}R\notag\\
&+r^{\top}\varepsilon-c_\nu\gamma_1\vartheta^{\top}\Phi_{\mathcal{J}}\vartheta
 +d_\nu\gamma_1\eta^{\top}\Phi_{\mathcal{J}}\eta
 -\gamma_2\norm{\vartheta}^{2}\notag\\
&+\gamma_2\norm{\eta}^{2}-\gamma_1(\vartheta+\eta)^{\top}d_R
 -(\vartheta+\eta)^{\top}d .
\label{eq:vdot_2}
\end{align}
The cross term is separated by Young's inequality, using the induced norm, as
$-2\eta^{\top}\mathcal{J}^{\top}r\le
\beta\mu_1\norm{\mathcal{J}}_{2}^{2}\norm{\eta}^{2}
+\norm{r}^{2}/(\beta\mu_1)$. Substituting
\eqref{eq:norm_factor} and using
$\eta^{\top}\Phi_{\mathcal{J}}\eta\le
\operatorname{tr}(\Phi_{\mathcal{J}})\norm{\eta}^{2}$, the $\eta$
channel becomes
\begin{align}
&-\beta\mathcal{N}\norm{\eta}^{2}
 +\beta\mu_1\norm{\mathcal{J}}_{2}^{2}\norm{\eta}^{2}
 +d_\nu\gamma_1\eta^{\top}\Phi_{\mathcal{J}}\eta
 +\gamma_2\norm{\eta}^{2}\notag\\
&\qquad\le-B\norm{\eta}^{2}
 -\big(\beta\mu_2-d_\nu\gamma_1\big)
   \operatorname{tr}(\Phi_{\mathcal{J}})\norm{\eta}^{2}
 \le-B\norm{\eta}^{2},
\label{eq:eta_channel}
\end{align}
where the second and third conditions in
\eqref{eq:gain_conditions} were used. The $\vartheta$ channel satisfies
$-c_\nu\gamma_1\vartheta^{\top}\Phi_{\mathcal{J}}\vartheta
-\gamma_2\norm{\vartheta}^{2}\le-a\norm{\vartheta}^{2}$, with
$a=A_0$ for all $t\ge t_0$ because $\Phi_{\mathcal{J}}\succeq0$, and
with $a=A$ for $t\ge\mathcal{T}$ under Assumption~\ref{ass:fe}. The
first condition in \eqref{eq:gain_conditions} gives
$\norm{r}^{2}/(\beta\mu_1)\le(k_1/2)\norm{r}^{2}$. The third bound is
therefore
\begin{align}
\dot{V}\le{}&-\tfrac{k_1}{2}\norm{r}^{2}-\alpha_1\norm{e}^{2}
 -a\norm{\vartheta}^{2}-B\norm{\eta}^{2}
 +r^{\top}R+r^{\top}\varepsilon\notag\\
&-\gamma_1(\vartheta+\eta)^{\top}d_R
 -(\vartheta+\eta)^{\top}d .
\label{eq:vdot_3}
\end{align}

The remaining sign-indefinite terms are bounded by Young's inequality
in the form $ab\le\epsilon a^{2}+b^{2}/(4\epsilon)$, using
\eqref{eq:dR_bound}, \eqref{eq:deps_bound},
\eqref{eq:R_bound_domain}, and $\norm{d}\le D_0$, as
\begin{align}
r^{\top}\varepsilon&\le\tfrac{k_1}{8}\norm{r}^{2}
  +\tfrac{2\bar{\varepsilon}^{2}}{k_1},
r^{\top}R\le\tfrac{k_1}{8}\norm{r}^{2}
  +\tfrac{2\bar{\rho}_0(\chi)^{2}}{k_1}\norm{\tht}^{4},\notag\\
-\gamma_1\vartheta^{\top}d_R&\le\tfrac{a}{4}\norm{\vartheta}^{2}
  +\tfrac{\gamma_1^{2}C_R^{2}}{a}\norm{\tht}^{4},
-\vartheta^{\top}d\le\tfrac{a}{4}\norm{\vartheta}^{2}
  +\tfrac{D_0^{2}}{a},\notag\\
-\gamma_1\eta^{\top}d_R&\le\tfrac{B}{4}\norm{\eta}^{2}
  +\tfrac{\gamma_1^{2}C_R^{2}}{B}\norm{\tht}^{4},
-\eta^{\top}d\le\tfrac{B}{4}\norm{\eta}^{2}
  +\tfrac{D_0^{2}}{B}.
\label{eq:young_terms}
\end{align}
Substituting \eqref{eq:young_terms} into \eqref{eq:vdot_3} and
collecting the coefficients of $\norm{r}^{2}$,
$\norm{\vartheta}^{2}$ and $\norm{\eta}^{2}$ with
\eqref{eq:rho_q} and \eqref{eq:iota} gives the fourth bound,
\begin{align}
\dot{V}\le{}&-\tfrac{k_1}{4}\norm{r}^{2}-\alpha_1\norm{e}^{2}
 -\tfrac{a}{2}\norm{\vartheta}^{2}-\tfrac{B}{2}\norm{\eta}^{2}\notag\\
&+\rho_q(a)\norm{\tht}^{4}+\iota(a).
\label{eq:vdot_4}
\end{align}
Applying
$\norm{\tht}^{4}=\norm{\vartheta-\eta}^{4}\le4\norm{\zeta}^{4}$
reduces \eqref{eq:vdot_4} to
\begin{equation}
\dot{V}\le-\big(\lambda(a)-4\rho_q(a)\norm{\zeta}^{2}\big)\norm{\zeta}^{2}
 +\iota(a).
\label{eq:vdot_collected}
\end{equation}

On $[t_0,t_e)$, where $\norm{\zeta}<\chi$, taking $a=A_0$ and using
the domain condition $4\rho_q(A_0)\chi^{2}\le\lambda(A_0)-\lambda_d$
reduces \eqref{eq:vdot_collected} to
$\dot{V}\le-\lambda_d\norm{\zeta}^{2}+\iota(A_0)
\le-(\lambda_d/\beta_2)V+\iota(A_0)$, and the comparison lemma yields
\begin{equation}
V(t)\le V(t_0)e^{-\frac{\lambda_d}{\beta_2}(t-t_0)}
 +\frac{\beta_2\iota(A_0)}{\lambda_d}
  \Big(1-e^{-\frac{\lambda_d}{\beta_2}(t-t_0)}\Big)
\label{eq:comparison}
\end{equation}
for $t\in[t_0,t_e)$. Since $\zeta(t_0)\in\mathcal{S}$ implies
$V(t_0)<\beta_1\chi^{2}$ by \eqref{eq:lyap_sandwich}, and
$\beta_2\iota(A_0)/\lambda_d<\beta_1\chi^{2}$ by hypothesis, the right
side of \eqref{eq:comparison} is a convex combination of two numbers
below $\beta_1\chi^{2}$ and therefore remains strictly below
$\beta_1\chi^{2}$ on $[t_0,t_e)$, and by continuity also at $t_e$ if
$t_e<\infty$. If $t_e<\infty$, then $\norm{\zeta(t_e)}=\chi$ by
continuity of $\zeta$, so \eqref{eq:lyap_sandwich} gives
$V(t_e)\ge\beta_1\chi^{2}$, a contradiction. Hence $t_e=\infty$ and
$\zeta(t)\in\mathcal{D}$ for all $t\ge t_0$. Consequently
Lemma~\ref{lem:data} covers every retained sample,
Lemma~\ref{lem:observer} gives $\norm{z(t)}\le\delta_\Delta$ for all
$t\ge t_\Delta$, and applying \eqref{eq:lyap_sandwich} to
\eqref{eq:comparison} yields \eqref{eq:main_bound_pre}. For
$t\ge\mathcal{T}$, taking $a=A$ in \eqref{eq:vdot_collected} and
using $\norm{\zeta}<\chi$ gives
$\dot{V}\le-\lambda_{\mathcal{T}}\norm{\zeta}^{2}+\iota(A)
\le-(\lambda_{\mathcal{T}}/\beta_2)V+\iota(A)$, and the comparison
lemma with \eqref{eq:lyap_sandwich} yields \eqref{eq:main_bound}.
\end{proof}

\begin{remark}\label{rem:leakage}
The leakage term $-\gamma_2\hat{\theta}$ is placed in the $\Lambda$ dynamics
rather than in the $\hat{\theta}$ dynamics. Since the cross terms cancel
exactly, this placement produces no mixed $\vartheta^{\top}\eta$ term in the
Lyapunov derivative. Its cost is a positive $\gamma_2\norm{\eta}^{2}$
contribution, which is dominated by the gain condition $\beta>\gamma_2$, and
its benefit is a negative $\gamma_2\norm{\vartheta}^{2}$ contribution that
provides damping in the $\vartheta$ channel even in the absence of
excitation; this damping is the constant $A_0$ on which the pre-excitation
invariance argument in Theorem~\ref{thm:main} is based.
\end{remark}

\section{Simulation}\label{sec:sim}

\subsection{Setup}
The developed method is evaluated on the six-degree-of-freedom
unmanned underwater vehicle of \cite{Patil2024Simultaneous}, with the
mass, inertia and hydrodynamic damping of \cite{Fischer2014UUV}. The
body-fixed inertia is
$M_b=\mathrm{diag}(215,265,265,40,80,80)$, the linear and quadratic
damping coefficients are $\mathrm{diag}(70,100,200,30,50,50)$ and
$\mathrm{diag}(100,200,50,50,100,100)$, and the vehicle is neutrally
buoyant. The state $x\in\R^{6}$ collects the earth-frame position and
orientation, so $X\in\R^{12}$ and $n=6$, and the input $u$ in
\eqref{eq:sys_dyn} is the commanded earth-frame acceleration, from
which the applied wrench follows through the known inertia matrix. The
desired trajectory is the helix
$x_d(t)=[2\cos(0.5t),\,2\sin(0.5t),\,0.1t,\,0,\,0,\,-0.125t]^{\top}$,
initialized at $x(t_0)=[-0.5,-0.5,-0.5,0,0,0]^{\top}$ and
$\dot{x}(t_0)=0$, and each run is integrated for $30$~s at a fixed
step of $10$~ms; the linearly growing depth and yaw are understood to
be continued by a bounded profile after the simulated interval, so that
the bounds on $x_d$ hold.

The DNN has $k=4$ hidden layers of $12$ neurons with $\tanh$
activations and a bias appended to every layer, so $n_\theta=702$. All
methods use identical initial weights, the same plant, and the same
observer. The control and observer gains are $\alpha_1=5$, $k_1=20$,
$\alpha_2=50$ and $k_\Delta=20$. The gating time $t_\Delta=5$~s was
selected conservatively; the observer error $\norm{\tilde{\Delta}}$ falls
below $0.03$ within $2$~s in every run. The history stack
holds $N=150$ samples. Every second sample after $t_\Delta$ is retained
until the stack is full; afterwards a candidate sample is examined every
$50$~ms and replaces the oldest retained sample only if the rank and
minimum eigenvalue of $\Phi_{\mathcal{J}}$, evaluated at the current
weights, do not decrease. The retained states $X_i$ are never otherwise
altered, and their Jacobians are re-evaluated at $\hat{\theta}(t)$ at
every step. The CL gains are $\gamma_1=20$ and $\gamma_2=10^{-3}$, the
gain law \eqref{eq:gain_law} uses $\kappa=0.1$ and $\nu=0.05$ with
$\Gamma(t_0)=0.5I$ confined to the band $[10^{-4},10]$, the higher-order
gain is $\beta=0.5$, and $\mu_1$ and $\mu_2$ are set to $1.05$ times the
lower bounds in \eqref{eq:gain_conditions}. The projection bound is
$\bar{\theta}=100$.

Three methods are compared. The Lb-CL-DNN of \cite{Hart2025Concurrent}
is a first-order CL law using the same history stack, replacement rule,
and observer, with its gain law run at $\nu=1$ and a forgetting factor
of $0.1$. The A-DNN of \cite{Le2025Accelerated} is a higher-order tuner
without CL, run with its own control law and the gains $\gamma_\nu=0.5$,
$\gamma_\theta=4$ and $\beta_s=0.05$ reported therein. The Lb-DNN
baseline is the first-order gradient law of \cite{patil2022lyapunov}
with constant gain $0.5I$. Performance is quantified over $t\in[10,30]$~s
by the RMS of the tracking error $\norm{e}$ and of the function
approximation error $\tilde{f}\triangleq f(X)-\Psi(X,\hat{\theta})$,
each split into linear and angular components, together with
$\tilde{f}$ evaluated at $100$ test inputs obtained by perturbing
reference states at random instants by up to $0.25$ in every coordinate,
which measures whether the function was learned rather than the path
memorized.

\subsection{Discussion}
Table~\ref{tab:results} and Fig.~\ref{fig:approx} summarize the results.
Compared to the first-order CL law of \cite{Hart2025Concurrent}, which is
driven by the same recorded data, the developed method reduces the composite
RMS approximation error from $0.458$ to $0.346$, i.e., by $24.5\%$, with
$22.9\%$ and $26.1\%$ reductions in the linear and angular channels and a
$24.5\%$ reduction in the tracking error. The off-trajectory errors of the
two CL methods are within $2\%$ of each other. Hence, the improvement of the developed law over
\cite{Hart2025Concurrent} lies in the accuracy of learning along the
recorded data rather than in generalization beyond it; since the two laws
also differ in the covariance weighting $\nu$, the comparison does not
attribute the improvement to the momentum channel alone.

The comparison with the higher-order tuner of \cite{Le2025Accelerated}
indicates the role of the recorded data. Since that tuner has no history
stack, it attains $0.933$, which is worse than the $0.777$ of the
first-order gradient baseline, whereas the developed method attains
$0.346$, an improvement of $62.9\%$ over \cite{Le2025Accelerated} and of
$44.4\%$ in the off-trajectory metric. Therefore, momentum alone does not
improve learning on this plant; the higher-order channel must be driven by
the recorded data. Across all runs, $\norm{\hat{\theta}}$ remained below
$8.1$, well inside the projection bound.

At the final time, the history stack $\Phi_{\mathcal{J}}$ evaluated at the
current weights has $\lambda_{\max}=26.9$ and a minimum eigenvalue at machine
precision, with a numerical rank of $182$ out of $702$ at a relative tolerance
of $10^{-9}$. The helix therefore excites only part of the parameter space,
Assumption~\ref{ass:fe} is not met on this trajectory, and the learning
observed in Table~\ref{tab:results} takes place along the excited directions
of $\Phi_{\mathcal{J}}$.
The domain and residual conditions of Theorem~\ref{thm:main} are likewise
sufficient conditions that were not verified at the simulated
gains.\footnote{As is typical of Lyapunov-based robust and adaptive
designs, the gain conditions are sufficient rather than necessary and
are conservative; the gains that satisfy them are not necessarily the
gains that perform best, and the simulated gains were selected for
performance.}

\begin{table}[!t]
\caption{RMS errors over $t\in[10,30]$~s with the retained history
stack, $N=150$, and improvement of the developed method in percent.}
\label{tab:results}
\centering\scriptsize
\newcommand{\numr}[1]{\num[round-mode=places,round-precision=1]{#1}}
\setlength{\tabcolsep}{1pt}
\begin{tabular}{lcccccc}
\toprule
Method & $\|e_{\mathrm{lin}}\|$ & $\|e_{\mathrm{ang}}\|$
 & $\|\tilde{f}_{\mathrm{lin}}\|$ & $\|\tilde{f}_{\mathrm{ang}}\|$
 & $\|\tilde{f}_{\mathrm{ot}}\|$ & $\|\tilde{f}\|$\\
\midrule
\csvreader[filter strcmp={\csvcoli}{cmp1}, late after line=\\]%
{metrics1.csv}%
{4=\lab,5=\elin,6=\eang,9=\ftlin,10=\ftang,11=\ftot,14=\ft}%
{\lab & \num{\elin} & \num{\eang} & \num{\ftlin} & \num{\ftang}
 & \num{\ftot} & \num{\ft}}
\csvreader[filter expr={test{\ifcsvstrequal{\csvcoli}{cmp2}}
 and test{\ifcsvstrequal{\csvcoliii}{accel}}}, late after line=\\]%
{metrics1.csv}%
{4=\lab,5=\elin,6=\eang,9=\ftlin,10=\ftang,11=\ftot,14=\ft}%
{\lab & \num{\elin} & \num{\eang} & \num{\ftlin} & \num{\ftang}
 & \num{\ftot} & \num{\ft}}
\midrule
\csvreader[filter strcmp={\csvcoli}{cmp1_impr}, late after line=\\]%
{metrics1.csv}%
{4=\lab,5=\elin,6=\eang,9=\ftlin,10=\ftang,11=\ftot,14=\ft}%
{Impr.\ vs.\ \lab~[\%] & \numr{\elin} & \numr{\eang} & \numr{\ftlin}
 & \numr{\ftang} & \numr{\ftot} & \numr{\ft}}
\csvreader[filter expr={test{\ifcsvstrequal{\csvcoli}{cmp2_impr}}
 and test{\ifcsvstrequal{\csvcoliii}{accel}}}, late after line=\\]%
{metrics1.csv}%
{4=\lab,5=\elin,6=\eang,9=\ftlin,10=\ftang,11=\ftot,14=\ft}%
{Impr.\ vs.\ \lab~[\%] & \numr{\elin} & \numr{\eang} & \numr{\ftlin}
 & \numr{\ftang} & \numr{\ftot} & \numr{\ft}}
\bottomrule
\end{tabular}
\end{table}

\begin{figure}[!t]
\centering
\includegraphics[width=\columnwidth]{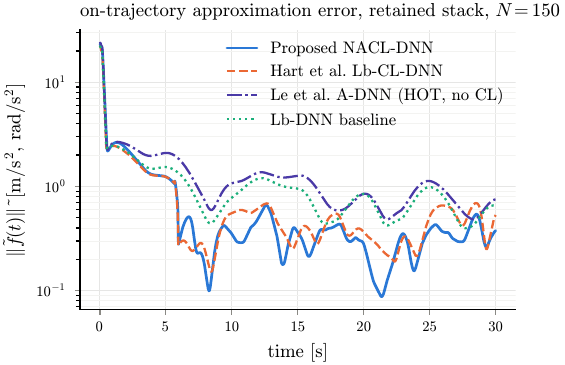}
\caption{Function approximation error $\|\tilde{f}(t)\|$ of the
developed method against the first-order CL law of
\cite{Hart2025Concurrent}, the higher-order tuner of
\cite{Le2025Accelerated} without a history stack, and the first-order
Lb-DNN baseline.}
\label{fig:approx}
\end{figure}

\section{Conclusion}\label{sec:conclusion}

A Nesterov-accelerated concurrent learning (NACL) adaptation law was developed for
a Lyapunov-based DNN controller that updates all layers online, in which the
momentum variable is driven by a finitely exciting history stack and the
higher-order cross term is dominated by a state-dependent normalization. The
Lyapunov-based analysis established exponential convergence of the tracking,
momentum, weight estimation, and observer errors to a ball whose radius is
set by the approximation error, the observer accuracy, and the leakage gain,
from initialization and under a finite excitation condition on the retained
Jacobian data rather than persistent excitation of the trajectory. On a
six-degree-of-freedom UUV, the developed method reduced the RMS function
approximation error by $24.5\%$ relative to the first-order CL law of
\cite{Hart2025Concurrent} driven by the same data and by $62.9\%$ relative
to the higher-order tuner without CL of \cite{Le2025Accelerated}.

\section*{Acknowledgment}
AI-based tools were used for the purposes of simulation code development, proofreading, and language review. The core mathematical
development, the stability analysis, and the conceptual framework were
entirely performed by the authors. All AI-generated outputs and suggestions
were thoroughly reviewed, verified, and approved by the authors, who take
full responsibility for the content of the final manuscript.

\bibliographystyle{IEEEtran}\bibliography{ref}

\begin{thebibliography}{10}
\providecommand{\url}[1]{#1}
\csname url@samestyle\endcsname
\providecommand{\newblock}{\relax}
\providecommand{\bibinfo}[2]{#2}
\providecommand{\BIBentrySTDinterwordspacing}{\spaceskip=0pt\relax}
\providecommand{\BIBentryALTinterwordstretchfactor}{4}
\providecommand{\BIBentryALTinterwordspacing}{\spaceskip=\fontdimen2\font plus
\BIBentryALTinterwordstretchfactor\fontdimen3\font minus
  \fontdimen4\font\relax}
\providecommand{\BIBforeignlanguage}[2]{{%
\expandafter\ifx\csname l@#1\endcsname\relax
\typeout{** WARNING: IEEEtran.bst: No hyphenation pattern has been}%
\typeout{** loaded for the language `#1'. Using the pattern for}%
\typeout{** the default language instead.}%
\else
\language=\csname l@#1\endcsname
\fi
#2}}
\providecommand{\BIBdecl}{\relax}
\BIBdecl

\bibitem{Hornik1989Multilayer}
K.~Hornik, M.~Stinchcombe, and H.~White, ``Multilayer feedforward networks are
  universal approximators,'' \emph{Neural Netw.}, vol.~2, no.~5, pp. 359--366,
  1989.

\bibitem{Narendra1990Identification}
K.~S. Narendra and K.~Parthasarathy, ``Identification and control of dynamical
  systems using neural networks,'' \emph{IEEE Trans. Neural Netw.}, vol.~1,
  no.~1, pp. 4--27, Mar. 1990.

\bibitem{Lewis1996Neural}
F.~L. Lewis, A.~Yesildirek, and K.~Liu, ``Multilayer neural-net robot
  controller with guaranteed tracking performance,'' \emph{IEEE Trans. Neural
  Netw.}, vol.~7, no.~2, pp. 388--399, Mar. 1996.

\bibitem{Sanner1992Gaussian}
R.~M. Sanner and J.-J.~E. Slotine, ``{Gaussian} networks for direct adaptive
  control,'' \emph{IEEE Trans. Neural Netw.}, vol.~3, no.~6, pp. 837--863, Nov.
  1992.

\bibitem{Ge2002Stable}
S.~S. Ge, C.~C. Hang, T.~H. Lee, and T.~Zhang, \emph{Stable Adaptive Neural
  Network Control}.\hskip 1em plus 0.5em minus 0.4em\relax Boston, MA, USA:
  Kluwer, 2002.

\bibitem{Farrell2006Poly}
J.~A. Farrell and M.~M. Polycarpou, \emph{Adaptive Approximation Based Control:
  Unifying Neural, Fuzzy and Traditional Adaptive Approximation
  Approaches}.\hskip 1em plus 0.5em minus 0.4em\relax Hoboken, NJ, USA: Wiley,
  2006.

\bibitem{Lavretsky2012Robust}
E.~Lavretsky and K.~A. Wise, \emph{Robust and Adaptive Control: With Aerospace
  Applications}.\hskip 1em plus 0.5em minus 0.4em\relax London, U.K.: Springer,
  2012.

\bibitem{patil2022lyapunov}
O.~S. Patil, D.~M. Le, M.~L. Greene, and W.~E. Dixon, ``{Lyapunov}-derived
  control and adaptive update laws for inner and outer layer weights of a deep
  neural network,'' \emph{IEEE Control Syst. Lett.}, vol.~6, pp. 1855--1860,
  2022.

\bibitem{joshi2020design}
G.~Joshi, J.~Virdi, and G.~Chowdhary, ``Design and flight evaluation of deep
  model reference adaptive controller,'' in \emph{Proc. AIAA SciTech Forum},
  2020, {AIAA} Paper 2020-1336.

\bibitem{sun2022lyapunov}
R.~Sun, M.~L. Greene, D.~M. Le, Z.~I. Bell, G.~Chowdhary, and W.~E. Dixon,
  ``{Lyapunov}-based real-time and iterative adjustment of deep neural
  networks,'' \emph{IEEE Control Syst. Lett.}, vol.~6, pp. 193--198, 2022.

\bibitem{griffis2023lyapunov}
E.~J. Griffis, O.~S. Patil, Z.~I. Bell, and W.~E. Dixon, ``{Lyapunov}-based
  long short-term memory ({Lb-LSTM}) neural network-based control,'' \emph{IEEE
  Control Syst. Lett.}, vol.~7, pp. 2976--2981, 2023.

\bibitem{hart2024lyapunov}
R.~G. Hart, E.~J. Griffis, O.~S. Patil, and W.~E. Dixon, ``{Lyapunov}-based
  physics-informed long short-term memory ({LSTM}) neural network-based
  adaptive control,'' \emph{IEEE Control Syst. Lett.}, vol.~8, pp. 13--18,
  2024.

\bibitem{makumi2025lyapunov}
W.~A. Makumi, O.~S. Patil, and W.~E. Dixon, ``{Lyapunov}-based adaptive deep
  system identification for approximate dynamic programming,''
  \emph{Automatica}, vol. 180, p. 112462, Oct. 2025.

\bibitem{Patil2024Simultaneous}
O.~S. Patil, E.~J. Griffis, W.~A. Makumi, and W.~E. Dixon, ``Simultaneous
  online system identification and control using composite adaptive
  {Lyapunov}-based deep neural networks,'' 2026, arXiv:2311.13056.

\bibitem{Ioannou1996Robust}
P.~A. Ioannou and J.~Sun, \emph{Robust Adaptive Control}.\hskip 1em plus 0.5em
  minus 0.4em\relax Upper Saddle River, NJ, USA: Prentice-Hall, 1996.

\bibitem{Narendra1989Stable}
K.~S. Narendra and A.~M. Annaswamy, \emph{Stable Adaptive Systems}.\hskip 1em
  plus 0.5em minus 0.4em\relax Englewood Cliffs, NJ, USA: Prentice-Hall, 1989.

\bibitem{Astrom2008Adaptive}
K.~J. {\AA}str{\"o}m and B.~Wittenmark, \emph{Adaptive Control}, 2nd~ed.\hskip
  1em plus 0.5em minus 0.4em\relax Mineola, NY, USA: Dover, 2008.

\bibitem{Sastry1989Adaptive}
S.~Sastry and M.~Bodson, \emph{Adaptive Control: Stability, Convergence, and
  Robustness}.\hskip 1em plus 0.5em minus 0.4em\relax Englewood Cliffs, NJ,
  USA: Prentice-Hall, 1989.

\bibitem{Aranovskiy2017DREM}
S.~Aranovskiy, A.~Bobtsov, R.~Ortega, and A.~Pyrkin, ``Performance enhancement
  of parameter estimators via dynamic regressor extension and mixing,''
  \emph{IEEE Trans. Autom. Control}, vol.~62, no.~7, pp. 3546--3550, 2017.

\bibitem{ortega2021new}
R.~Ortega, S.~Aranovskiy, A.~A. Pyrkin, A.~Astolfi, and A.~A. Bobtsov, ``New
  results on parameter estimation via dynamic regressor extension and mixing:
  Continuous and discrete-time cases,'' \emph{IEEE Trans. Autom. Control},
  vol.~66, no.~5, pp. 2265--2272, May 2021.

\bibitem{bobtsov2022generation}
A.~Bobtsov, B.~Yi, R.~Ortega, and A.~Astolfi, ``Generation of new exciting
  regressors for consistent online estimation of unknown constant parameters,''
  \emph{IEEE Trans. Autom. Control}, vol.~67, no.~9, pp. 4746--4753, 2022.

\bibitem{jenkins2020fast}
B.~Jenkins, A.~Krupadanam, and A.~M. Annaswamy, ``Fast adaptive observers for
  battery management systems,'' \emph{IEEE Trans. Control Syst. Technol.},
  vol.~28, no.~3, pp. 776--789, 2020.

\bibitem{kreisselmeier1977adaptive}
G.~Kreisselmeier, ``Adaptive observers with exponential rate of convergence,''
  \emph{IEEE Trans. Autom. Control}, vol.~22, no.~1, pp. 2--8, 1977.

\bibitem{roy2018combined}
S.~B. Roy, S.~Bhasin, and I.~N. Kar, ``Combined {MRAC} for unknown {MIMO} {LTI}
  systems with parameter convergence,'' \emph{IEEE Trans. Autom. Control},
  vol.~63, no.~1, pp. 283--290, 2018.

\bibitem{katiyar2023initial}
A.~Katiyar, S.~B. Roy, and S.~Bhasin, ``Initial-excitation-based robust
  adaptive observer for {MIMO} {LTI} systems,'' \emph{IEEE Trans. Autom.
  Control}, vol.~68, no.~4, pp. 2536--2543, 2023.

\bibitem{cho2018composite}
N.~Cho, H.-S. Shin, Y.~Kim, and A.~Tsourdos, ``Composite model reference
  adaptive control with parameter convergence under finite excitation,''
  \emph{IEEE Trans. Autom. Control}, vol.~63, no.~3, pp. 811--818, 2018.

\bibitem{chowdhary2013concurrent}
G.~Chowdhary, T.~Yucelen, M.~M{\"u}hlegg, and E.~N. Johnson, ``Concurrent
  learning adaptive control of linear systems with exponentially convergent
  bounds,'' \emph{Int. J. Adapt. Control Signal Process.}, vol.~27, no.~4, pp.
  280--301, 2013.

\bibitem{kamalapurkar2017concurrent}
R.~Kamalapurkar, B.~Reish, G.~Chowdhary, and W.~E. Dixon, ``Concurrent learning
  for parameter estimation using dynamic state-derivative estimators,''
  \emph{IEEE Trans. Autom. Control}, vol.~62, no.~7, pp. 3594--3601, Jul. 2017.

\bibitem{chowdhary2013cladmpc}
G.~Chowdhary, M.~M{\"u}hlegg, J.~P. How, and F.~Holzapfel, ``Concurrent
  learning adaptive model predictive control,'' in \emph{Advances in Aerospace
  Guidance, Navigation and Control}.\hskip 1em plus 0.5em minus 0.4em\relax
  Berlin, Germany: Springer, 2013, pp. 29--47.

\bibitem{parikh2019integral}
A.~Parikh, R.~Kamalapurkar, and W.~E. Dixon, ``Integral concurrent learning:
  Adaptive control with parameter convergence using finite excitation,''
  \emph{Int. J. Adapt. Control Signal Process.}, vol.~33, no.~12, pp.
  1775--1787, Dec. 2019.

\bibitem{slotine1989composite}
J.-J.~E. Slotine and W.~Li, ``Composite adaptive control of robot
  manipulators,'' \emph{Automatica}, vol.~25, no.~4, pp. 509--519, Jul. 1989.

\bibitem{pan2018composite}
Y.~Pan and H.~Yu, ``Composite learning robot control with guaranteed parameter
  convergence,'' \emph{Automatica}, vol.~89, pp. 398--406, Mar. 2018.

\bibitem{Basyal2024Augmentation}
S.~Basyal, J.~Ting, K.~Mishra, and B.~C. Allen, ``Augmentation of a
  {Lyapunov}-based deep neural network controller with concurrent learning for
  control-affine nonlinear systems,'' in \emph{Proc. Amer. Control Conf.},
  2024, pp. 2885--2890.

\bibitem{Basyal2024EulerLagrange}
------, ``Concurrent learning and {Lyapunov}-based updates of deep neural
  networks for {Euler}-{Lagrange} dynamic systems,'' in \emph{Proc. Amer.
  Control Conf.}, 2024, pp. 15--20.

\bibitem{Basyal2026Realtime}
S.~Basyal, J.~Ting, K.~Mishra, and B.~Allen, ``A lyapunov-and concurrent
  learning-based approach to train the weights of a dnn-based controller in
  real-time,'' in \emph{2026 American Control Conference (ACC)}, 2026, pp.
  444--449.

\bibitem{Hart2025Concurrent}
R.~G. Hart, O.~S. Patil, Z.~I. Bell, and W.~E. Dixon, ``Concurrent learning for
  system identification and control using {Lyapunov}-based deep neural
  networks,'' \emph{IEEE Control Syst. Lett.}, vol.~9, pp. 2957--2962, 2025.

\bibitem{Nesterov1983Method}
Y.~E. Nesterov, ``A method of solving a convex programming problem with
  convergence rate {$O(1/k^2)$},'' \emph{Sov. Math. Dokl.}, vol.~27, no.~2, pp.
  372--376, 1983.

\bibitem{Su2014Differential}
W.~Su, S.~Boyd, and E.~J. Cand{\`e}s, ``A differential equation for modeling
  {Nesterov}'s accelerated gradient method: Theory and insights,'' in
  \emph{Adv. Neural Inf. Process. Syst.}, vol.~27, 2014, pp. 2510--2518.

\bibitem{gaudio2019connections}
J.~E. Gaudio, T.~E. Gibson, A.~M. Annaswamy, M.~A. Bolender, and E.~Lavretsky,
  ``Connections between adaptive control and optimization in machine
  learning,'' in \emph{Proc. IEEE Conf. Decis. Control}, 2019, pp. 4563--4568.

\bibitem{gaudio2021high}
J.~E. Gaudio, A.~M. Annaswamy, M.~A. Bolender, E.~Lavretsky, and T.~E. Gibson,
  ``A class of high order tuners for adaptive systems,'' \emph{IEEE Control
  Syst. Lett.}, vol.~5, no.~2, pp. 391--396, 2021.

\bibitem{cui2022accelerated}
Y.~Cui and A.~M. Annaswamy, ``Accelerated performance and accelerated learning
  with discrete-time high-order tuners,'' in \emph{Proc. IEEE Conf. Decis.
  Control}, 2022, pp. 1350--1355.

\bibitem{gaudio2022parameter}
J.~E. Gaudio, A.~M. Annaswamy, E.~Lavretsky, and M.~A. Bolender, ``Parameter
  estimation in adaptive control of time-varying systems under a range of
  excitation conditions,'' \emph{IEEE Trans. Autom. Control}, vol.~67, no.~10,
  pp. 5440--5447, 2022.

\bibitem{BasuRoy2026Exponential}
S.~B. Roy and A.~Annaswamy, ``Exponential stability in the large and
  accelerated convergence for higher order adaptive tuners with persistence of
  excitation,'' \emph{IEEE Trans. Autom. Control}, 2026, early access.

\bibitem{pan2023least}
Y.~Pan, T.~Shi, and C.~Wen, ``Least-squares composite learning backstepping
  control with high-order tuners,'' in \emph{Proc. IEEE Conf. Decis. Control},
  2023, pp. 515--520.

\bibitem{shi2024composite}
T.~Shi, C.~Wen, and Y.~Pan, ``Composite learning backstepping control with
  guaranteed exponential stability and robustness,'' 2024, arXiv:2401.10785.

\bibitem{Le2022Integral}
D.~M. Le, O.~S. Patil, P.~M. Amy, and W.~E. Dixon, ``Integral concurrent
  learning-based accelerated gradient adaptive control of uncertain
  {Euler}-{Lagrange} systems,'' in \emph{Proc. Amer. Control Conf.}, 2022, pp.
  806--811.

\bibitem{Le2025Accelerated}
D.~M. Le, O.~S. Patil, C.~F. Nino, and W.~E. Dixon, ``Accelerated gradient
  approach for deep neural network-based adaptive control of unknown nonlinear
  systems,'' \emph{IEEE Trans. Neural Netw. Learn. Syst.}, vol.~36, no.~4, pp.
  6299--6313, 2025.

\bibitem{Patil2026Bounds}
O.~S. Patil, B.~C. Fallin, C.~F. Nino, R.~G. Hart, and W.~E. Dixon, ``Bounds on
  deep neural network partial derivatives with respect to parameters,'' 2026,
  arXiv:2503.21007.

\bibitem{Krstic1995}
M.~Krsti\'{c}, I.~Kanellakopoulos, and P.~V. Kokotovi\'{c}, \emph{Nonlinear and
  Adaptive Control Design}.\hskip 1em plus 0.5em minus 0.4em\relax New York,
  NY, USA: Wiley, 1995.

\bibitem{Fischer2014UUV}
N.~Fischer, D.~Hughes, P.~Walters, E.~M. Schwartz, and W.~E. Dixon, ``Nonlinear
  {RISE}-based control of an autonomous underwater vehicle,'' \emph{IEEE Trans.
  Robot.}, vol.~30, no.~4, pp. 845--852, 2014.

\end{thebibliography}

\end{document}